\newcommand{\n}[1] {\mbox{\boldmath{$#1$}}} 
\newcommand{\be}{\begin{eqnarray}}
\newcommand{\ee}{\end{eqnarray}}
\newcommand{\beq}[1]{\begin{equation}\label{#1}}
\newcommand{\eeq}{\end{equation}}
\newcommand{\ba}{\begin{eqnarray*}}
\newcommand{\ea}{\end{eqnarray*}}
\newcommand{\uq}{\underline{q}}
\newtheorem{deft}{Definition}[section]
\newtheorem{lem}{Lemma}[section]
\newtheorem{prop}{Proposition}
\newtheorem{thm}{Theorem}
\title{\bf Generalization of Jeffreys' Divergence Based Priors\\ for Bayesian Hypothesis testing}
\author{
M.J. Bayarri\\ University of Valencia \and G. Garc\'{\i}a-Donato \\  University of Castilla-La
Mancha
\footnote{Address for correspondence: Gonzalo Garc\'{\i}a-Donato, Department of Economy, Plaza
Universidad 2, 02071 Albacete, Spain. Email:Gonzalo.GarciaDonato@uclm.es} \vspace{5mm} }
\begin{document}
\maketitle 


\begin{abstract}
In this paper we introduce objective proper prior distributions for hypothesis testing and model
selection based on measures of divergence between the competing models; we call them {\it
divergence based} (DB) priors. DB priors have simple forms and desirable properties, like
information (finite sample) consistency; often, they are similar to other existing proposals like
the intrinsic priors; moreover, in normal linear models scenarios, they exactly reproduce
Jeffreys-Zellner-Siow priors.  Most importantly, in challenging scenarios such as irregular models
and mixture models, the DB priors are well defined and very reasonable, while alternative
proposals are not. We  derive approximations to the DB priors as well as MCMC and asymptotic
expressions for the associated Bayes factors.

{\bf Keywords}: Bayes factors; Information Consistency; Intrinsic priors; Irregular models;
Kullback-Leibler divergence; Mixture models.

\end{abstract}

\section{Introduction}
For the data $\n y$, with density $f(\n y\mid\n\theta,\n\nu)$, we consider the hypothesis testing
problem:
 \beq{ht} H_1:\n\theta=\n\theta_0,\hspace{.25cm}\mbox{vs.}
\hspace{.25cm} H_2:\n\theta\ne\n\theta_0, \eeq
 where $\n\theta_0\in\Theta$ is a known value. This is equivalent to the model selection
 problem of choosing between models:
  \beq{comp}
   M_1:f_1(\n{y}\mid\n\nu_1)=f(\n{y}\mid\n\theta_0,\n\nu_1)\hspace{.25cm}\mbox{vs.} \hspace{.25cm}
   M_2:f_2(\n{y}\mid\n\theta,\n\nu_2)=f(\n{y}\mid\n\theta,\n\nu_2),
   \eeq
where the notation reflects the fact that often  $\n\nu_1$ and $\n\nu_2$ represent different
quantities in each model. In Jeffreys' scenarios (Jeffreys, 1961), $\n\nu_1$ and $\n\nu_2$ had the
same meaning;  he called  $\n\theta$ the  {\it new parameter}, and $\n\nu_1$ and $\n\nu_2$, the
{\it  common parameters}  (also known as nuisance parameters). We revisit this issue in Section~4.

We aim for an {\it objective Bayes} solution to this model selection problem; that is,  no `external'
(subjective) information is assumed, other than the data,  $\n y$, and the information implicitly
needed to pose the problem, choose the competing models, etc. An excellent exposition of the
advantages of Bayesian methods, specially objective Bayes methods,  for problems with model
uncertainty is Berger and Pericchi (2001).

Usual Bayesian solutions (for $0$-$k_i$ loss functions) to (\ref{ht}) (or, equivalently, to
(\ref{comp})) are based on the posterior odds:
\[
\frac{\mbox{Pr}(H_1 \mid \n y)}{\mbox{Pr}(H_2 \mid \n y)} =  \, \frac{\mbox{Pr}(H_1)}{\mbox{Pr}(H_2)}
\ \times B_{12} \, ,
\]
where $\mbox{Pr}(H_i), \ i=1,2$ are the prior probabilities of the hypotheses, and  $ B_{12}$ is {\it
Bayes Factor} for $H_1$ against $H_2$:
\beq{bf}
 B_{12} = \frac{m_1(\n y)}{m_2(\n y)} = \frac{\int\, f_1(\n y\mid\n\nu_1) \, \pi_1(\n\nu_1)\,
d\n\nu_1}{\int\, f_2(\n y\mid\n\theta,\n\nu_2) \, \pi_2(\n\theta,\n\nu_2)\, d\n\theta  \, d\n\nu_2} \
, \eeq
where  $\pi_1(\n\nu_1)$ is the prior under $H_1$ and $\pi_2(\n\theta,\n\nu)$ the prior under $H_2$.
That is,  $B_{12}$ is the ratio of the marginal (averaged) likelihoods of the models.

It is common practice in objective Bayes approaches to concentrate on derivations of the Bayes
factors, letting the ultimate choice (whether objective or subjective) of the prior model
probabilities (and the derivations of the posterior odds) to the user. Bayes factors were
extensively used by Jeffreys (1961) as a measure of evidence in favor of a model (see also Berger,
1985; Berger and Delampady, 1987, and  Berger and Sellke, 1987); Kass and Raftery (1995) is a good
reference for review and applications. Bayes factors are also crucial ingredients of model
averaging approaches (see Clyde, 1999; Hoeting et al, 1999). In the rest of the paper, we
concentrate on the derivation of objective priors to compute Bayes factors.

 A main issue for deriving objective Bayes factors is appropriate choice of $\pi_1(\n\nu_1)$
 and $\pi_2(\n\theta,\n\nu_2)$ for use in \eqref{bf}. It is well known that familiar improper objective
 priors (or non-informative priors) for estimation problems (under a fixed model) are
 usually seriously inadequate in the presence of model uncertainty, generally producing arbitrary answers.
 (Interesting exceptions are studied in Berger, Pericchi and Varshavsky, 1998.) Of course, when
 improper priors can not be used, use of arbitrarily vague (but proper) priors is not a cure, and
 generally it is even worse. Another bad solution often encountered in practice is use of an apparently
 `innocuous', harmless, but yet arbitrary, proper prior, since it can severely dominate the likelihood in
ways that are not anticipated (and can not be investigated for high dimensional problems).

 There are two basic approaches to compute Bayes factors when there is not enough information available
 for trustworthy subjective assessment of $\pi_1(\n\nu_1)$ and $\pi_2(\n\theta,\n\nu_2)$ . A very successful one
 is to directly derive the objective Bayes factors themselves, usually by `training' and calibrating in several
 ways the non-appropriate Bayes factors obtained from usual objective improper priors (see Berger and Pericchi, 2001
 for reviews and  references). However, all these objective Bayes factors should ultimately be checked to
 correspond (approximately) to a genuine Bayes factor derived from a sensible prior. The alternative approach
 is to look for `formal rules' for constructing  `objective' but proper priors that have nice properties and are
 appropriate for using in model selection; Bayes factors are then just computed from these objective proper priors.
 Whether these Bayes factors are appropriate can then be directly judged from the adequacy of the
 priors used.

 Choice of prior distributions in scenarios of model uncertainty is still largely an open question, and
 only partial answers are known. Several methods have been proposed for use in general scenarios, like
 the arithmetic
intrinsic (AI) priors (Berger and Pericchi, 1996; Moreno, Bertolino
and Racugno, 1998); the fractional intrinsic (FI) priors (De Santis
and Spezaferri, 1999; Berger and Mortera, 1999); the expected
posterior (EP) priors (P\'{e}rez and Berger, 2002); the unit
information priors (Kass and Wasserman, 1995) and  predictively
matched priors (Ibrahim and Laud, 1994; Laud and Ibrahim, 1995;
Berger, Pericchi and Varshavsky, 1998; Berger and Pericchi, 2001).
In the specific context of linear models, widely used prior with
nice properties are Jeffreys-Zellner-Siow (JZS) priors (Jeffreys,
1961; Zellner and Siow, 1980,1984; Bayarri and Garc\'{\i}a-Donato,
2007). An interesting generalization is the mixtures of
$g$-priors (Liang et al., 2007).

All these methods are insightful, provide many interesting and useful ideas, and indeed have shown to behave nicely in a number of testing and model selection problems. Nonetheless, except for the very specific scenario of linear models, nobody seems to have investigated the ramifications of Jeffreys (1961) pioneering proposal (see the end of Section \ref{DBpriors}). His was indeed the first general derivation of objective priors for hypothesis testing, and was intended as a generalization of his proposal for testing a normal mean. Given the success of the generalization of this Jeffrey's testing prior to linear models (Zellner and Siow, 1980,1984; Bayarri and Garc\'{\i}a-Donato, 2007), it is somewhat surprising that his general proposal has not been pursued. We think that it is historically important to pursue this investigation, and we do so in this paper.

%
%

Specifically, we generalize Jeffrey's  pioneering suggestion, and use divergence measures between the competing models to
derive the required (proper) priors. We call these priors {\it
divergence based} (DB) priors. The main motivation was to generalize
the useful JZS priors for use in scenarios other than the normal
linear model, while at the same time extending Jeffrey's general proposal. We will show that indeed the DB priors are the JZS
priors in linear model contexts; also, they are as easy to derive
(often easier) than other popular proposals (AI, FI or EP priors),
being quite similar to them in many instances; most interestingly,
they are well defined in certain scenarios where all of the other proposals
fail.

For clarity of exposition, we consider first the case when there are no nuisance parameters.
Development for the general case is delayed till Section~\ref{nuisance}, once the basic ideas have
been introduced, and the behavior of DB priors studied in this considerably simpler scenario.

\section{DB priors}\label{DBpriors}
Assume first the problem without nuisance parameters:
%
\beq{compWN} M_1:f_1(\n{y})=f(\n{y}\mid\n\theta_0)\hspace{.25cm}\mbox{vs.} \hspace{.25cm}
M_2:f_2(\n{y}\mid\n\theta)=f(\n{y}\mid\n\theta). \eeq
 That is, the simpler model ($M_1$) involve no unknown parameters; hence only the prior for $\n \theta$
 under $M_2$ is needed. We drop the subindex in the previous section and denote such prior simply by
 $\pi(\n\theta)$; clearly $\pi(\n\theta)$ has to be proper.

Our proposal for DB priors for $\n \theta$ will be in terms of divergence measures between the
competing models $f(\n y \mid \n\theta_0)$ and $f(\n y\mid\n\theta)$, based on Kullback-Leibler
directed divergences
\begin{equation} \label{eq:KL}
 KL[\n\theta_0 :\n\theta]=\int
 [\log f(\n y \mid \n\theta) - \log f(\n y \mid \n \theta_0)] \,  f(\n y \mid \n\theta) \,d\n y,
 \end{equation}
(assuming continuous $\n y$ for simplicity). $KL$ is a measure of the information in $\n y$ to
discriminate between $\n \theta$ and $\n \theta_0$; it is designed to measure how far apart the
two competing densities are in the sense of the likelihood  (Schervish, 1995).

We do not directly use $KL$ to define the DB prior because it is not symmetric with respect to its arguments, and hence it would likely result in nonsymmetric priors; however, symmetric measures of divergence can be
derived by taking sums (which was Jeffrey's choice) or minimums of $KL$ divergences. We define:
 \begin{equation} \label{Ds}
 D^S[\n\theta,\n\theta_0]=
KL[\n\theta:\n\theta_0]+ KL[\n\theta_0:\n\theta],
\end{equation}
and
\begin{equation} \label{Dm}
D^M[\n\theta,\n\theta_0]= 2\times\;\min \{ KL[\n\theta:\n\theta_0], KL[\n\theta_0:\n\theta]\}.
 \end{equation}
We multiply by $2$ the minimum  in  the definition of $D^M$ so that both measures are in the same scale;
indeed, in some symmetric models (like in the normal scenario) both measures of
divergence coincide. Generalizations of $KL$, $D^S$ and $D^M$ to include marginal parameters are
discussed in Section~\ref{nuisance}. Note that $D^M$ is well defined even when one of directed
$KL$ divergences is not, which is the case when the competing models have different support. Except for these
irregular scenarios, $D^S$ is well defined and it is considerably easier to derive than $D^M$.
Most of the derivations and properties to follow are common to both $D^S$ and $D^M$. To avoid
tedious repetitions, we then simply use $D$ to refer to anyone of them. We use the superindex $S$
or $M$ only when necessary.

It is well known that $D \ge 0$ with equality if and only if $\n\theta=\n\theta_0$, although it is
not a metric (the triangle inequality does not hold). Our proposal, is based on
{\it unitary  measures of divergence}, 
$\bar{D}$,  which we take to be $D$ divided by the {\it effective
sample size} $n^*$, $\bar{D}= D/n^*$. In simple univariate i.i.d.
data  the effective sample equals the number of scalar data points,
but it does not need to be so in general. Indeed, in complex
situations, it can be a difficult concept; although there have been
several attempts in the literature to formalize it (see e.g. Pauler,
1998; Pauler, Wakefield and Kass, 1999; Berger et al. 2007), no
general agreed definition seems to exist. In all of the examples of
this paper, it is quite clear what $n^*$  should be, so we rely for
now in simple, intuitive interpretations.

\subsection{Motivation: scalar location parameters}
Suppose $\n y$ is a random sample from a univariate location family:
$$
f(\n y\mid\theta)=\prod_{i=1}^n\,f(y_i\mid\theta)=\prod_{i=1}^n\,
g(y_i-\theta),\hspace{0.25cm}\theta\in{\cal R}.
$$
It has been argued (Berger and Delampady 1987; Berger and Sellke
1987) that in symmetric problems with $\Theta={\cal R}$, objective
testing priors $\pi(\theta)$ under $H_2: \theta \neq \theta_0$
should be unimodal and symmetric about $\theta_0$; these priors
prevent introducing excessive bias toward $H_2$.  Accordingly, we
look for a proper $\pi(\theta)$ which, when in this simple scenario,  has these desirable
characteristics and which  is easily
generalizable to other situations.

As before, let  $\bar{D}$ be a {\it unitary} symmetrized divergence.
We consider use of a function, $h$ of  $\bar{D}$ as a testing prior
under $H_2$; that is  $\pi(\theta)\propto
h(\bar{D}[\theta,\theta_0])$. Since $\pi$ has to be proper, $h(t)$
has to be a decreasing (no-increasing) function for $t > 0$.
A first possibility could be to take $h(t)=\exp\{-qt\}$ for some
$q>0$, but this results in priors with short tails. Short-tailed
priors are usually not adequate for model selection, since they tend
to exhibit undesirable (finite sample) inconsistent behavior (see
Liang et al 2007).

We explore instead use of the functions $h_q(t)=(1+t)^{-q}$, where $q>0$  controls  thickness of the
tails of $\pi(\theta)$. Let
$$
c(q)=\int\, h_q(\bar{D}[\theta,\theta_0])\, d\theta=\int\,
\left(1+\frac{D[\theta,\theta_0]}{n^*}\right)^{-q}\, d\theta,
$$
and define
$$
\uq=\inf\{q\ge 0:\hspace{.1cm} c(q)<\infty\},\hspace{.7cm} q_*=\uq+1/2.
$$
For finite $\uq$, our specific proposal for a DB prior in this
location problem is
 \begin{equation} \label{dbloc}
 \pi^D(\theta)=c(q_*)^{-1}\,
 \left(1+\frac{D[\theta,\theta_0]}{n^*}\right)^{-q_*}\propto h_{q_*}
 \big(\bar{D}[\theta,\theta_0]\big) \, .
\end{equation}
 Generalization to vector valued $\n \theta$ is trivial.

We use $q_*$ instead of the more natural $\uq$ because $\uq$ is not guaranteed to produce proper priors. Of course, if $\uq$ is finite, any $q = \uq+\delta$, with $\delta>0$
results in proper priors, and hence could have been used to define a
DB prior. Our specific proposal, $\delta=1/2$ was chosen to
reproduce the well known  Jeffreys-Zellner-Siow prior in the Normal
context;  in general this choice results in densities with heavy
tails. Moreover, we have found that in general $0<\delta<1$ is a good choice since it produces
priors without moments, which in normal scenarios is needed to avoid
undesirable behavior of conjugate $g$ priors (Liang et al, 2007).

The following lemma establishes the desired symmetry and unimodality
of the DB prior. The proof follows easily from properties of $D$ in
these location problems and is avoided.
\begin{lem} Assume $\uq<\infty$; then  $\pi^{D}(\theta)$ is unimodal and  symmetric around $\theta_0$.
\end{lem}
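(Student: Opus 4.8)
The plan is to push the entire statement onto the symmetrized divergence $D[\theta,\theta_0]$ regarded as a function of $\theta$, and then read off the conclusion from the fact that $\pi^D\propto h_{q_*}(\bar D)$ with $h_{q_*}$ strictly decreasing. First I would use translation invariance: writing $f(y_i\mid\theta)=g(y_i-\theta)$ and changing variables $z=y-\theta$ (resp. $z=y-\theta_0$) in \eqref{eq:KL}, each directed divergence depends on $\theta,\theta_0$ only through $\delta=\theta-\theta_0$. Setting $k(\delta)=n\int[\log g(z)-\log g(z+\delta)]\,g(z)\,dz$, one finds $KL[\theta_0:\theta]=k(\delta)$ and $KL[\theta:\theta_0]=k(-\delta)$, so by \eqref{Ds} and \eqref{Dm}
\[
D^S[\theta,\theta_0]=k(\delta)+k(-\delta), \qquad D^M[\theta,\theta_0]=2\min\{k(\delta),k(-\delta)\}.
\]
Both expressions are invariant under $\delta\mapsto-\delta$ — the symmetrization merely interchanges the two arguments in the sum and in the minimum — so $D[\theta,\theta_0]=\Delta(|\theta-\theta_0|)$ for a single function $\Delta$. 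Since $\bar D=D/n^*$ and $\pi^D$ is a fixed decreasing transform of $\bar D$, this already yields $\pi^D(\theta_0+t)=\pi^D(\theta_0-t)$, the asserted symmetry about $\theta_0$; notably it needs no symmetry of $g$ itself, since evenness is manufactured by the symmetrization in \eqref{Ds}--\eqref{Dm}.

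For unimodality it remains to show that $\Delta$ is non-decreasing, i.e. that $D$ grows as $\theta$ leaves $\theta_0$; composing the even, increasing $\Delta$ with the strictly decreasing $h_{q_*}$ then forces $\pi^D$ to peak uniquely at $\theta_0$ and to fall off monotonically on each side. I would obtain this monotonicity from the shape of $g$. Writing $\phi=-\log g$, one has $k(\delta)=n\int[\phi(z+\delta)-\phi(z)]\,g(z)\,dz$, whence $D^S$ has derivative $n\int[\phi'(z+\delta)-\phi'(z-\delta)]\,g(z)\,dz\ge0$ for every $\delta\ge0$ as soon as $\phi$ is convex, i.e. $g$ is log-concave (the natural regularity for symmetric location testing). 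The same convexity gives $k''(\delta)=n\int\phi''(z+\delta)\,g(z)\,dz\ge0$, so $k$ is convex; together with $k'(0)=-n\int g'(z)\,dz=0$ this places the global minimum of $k$ at $\delta=0$, making both $k(\delta)$ and $k(-\delta)$ non-decreasing in $\delta\ge0$ and hence $D^M$ non-decreasing as well.

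The only genuinely non-routine ingredient is this monotonicity of $\Delta$: unlike the symmetry, it is not automatic and requires a shape assumption on $g$ (log-concavity suffices, and is precisely what makes each directed Kullback--Leibler divergence increase away from $\theta_0$). Everything else is bookkeeping — translation invariance reduces $D$ to a function of $\delta$, the symmetrization makes that function even for free, and the strictly decreasing nature of $h_{q_*}$ converts ``$\Delta$ even and increasing in $|\delta|$'' into ``$\pi^D$ symmetric and unimodal about $\theta_0$,'' which is the claim.
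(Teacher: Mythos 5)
The paper never actually prints a proof of this lemma (it is declared to ``follow easily from properties of $D$ in these location problems'' and is omitted), so the comparison is against the argument the authors evidently have in mind. Your symmetry half is exactly that argument, carried out correctly and in full: translation invariance gives $KL[\theta_0:\theta]=k(\delta)$ and $KL[\theta:\theta_0]=k(-\delta)$ with $\delta=\theta-\theta_0$, both symmetrizations \eqref{Ds} and \eqref{Dm} are therefore even functions of $\delta$, and composing with the decreasing $h_{q_*}$ (and the harmless division by $n^*$) transfers this evenness to $\pi^D$. Your remark that no symmetry of $g$ is needed --- the evenness is manufactured by the symmetrization itself --- is correct and worth making explicit.

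Where you part company with the paper is the unimodality half, and there you are the more careful party. Monotonicity of $D$ in $|\delta|$ is genuinely not automatic, and your sufficient condition is proved correctly: for log-concave $g$, writing $\phi=-\log g$, the derivative $n\int[\phi'(z+\delta)-\phi'(z-\delta)]\,g(z)\,dz\ge 0$ handles $D^S$, and convexity of $k$ together with $k'(0)=-n\int g'(z)\,dz=0$ handles $D^M$. Moreover, some such shape hypothesis is unavoidable: the lemma as literally stated (assuming only $\uq<\infty$) is false. Take $g$ to be an equally weighted mixture of $J$ well-separated Gaussian bumps spaced $2a$ apart. A shift by the full period $2a$ misaligns only one bump out of $J$, giving $KL\approx 2a^{2}/J$, whereas a shift by the half period $a$ misaligns all of them, giving $KL\approx a^{2}/2$; for $J>4$ both directed divergences, hence $D^S$ and $D^M$, are non-monotone in $|\delta|$, so $\pi^{D}$ has local maxima near $\theta_0\pm 2a$ --- yet $D$ still grows quadratically at infinity, so $\uq\le 1/2<\infty$ and the lemma's hypothesis holds. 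So the log-concavity you impose is not a gap in your argument; it is an implicit regularity condition that the paper's ``proof is avoided'' remark glosses over, and flagging it is the right call.
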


 Definition of DB priors for scale parameters is also direct. Indeed
assume that  $\theta$ is a scale parameter for a positive random variable $X$; then, $\xi=\log\theta$
is a location parameter for $Y=\log X$, with density $f^*(y\mid\xi)$. Applying the definition in
\eqref{dbloc}, the DB prior for  $\xi$ is:

 \beq{dbscal}
\pi^D(\xi)\propto h_{q_*}(\bar{D}^*[\xi,\xi_0]),
 \eeq
where $\xi_0 = \log(\theta_0)$ and $\bar{D}^*[\xi,\xi_0]$ is the unitary measure of divergence
between $f^*(\n y\mid\xi_0)$ and $f^*(\n y\mid\xi)$. Therefore, in the original parameterization:
\beq{dbscal2}
 \pi^D(\theta)\propto h_{q_*}(\bar{D}^*[\log\theta,\log\theta_0]) \, \frac{1}{\theta}=
h_{q_*}(\bar{D}[\theta,\theta_0]) \, \pi^N(\theta),
 \eeq
 where, because of invariance of $\bar{D}$
under reparameterizations, $\bar{D}^*[\log\theta,\log\theta_0]=\bar{D}[\theta,\theta_0]$, and
$\pi^N(\theta)=1/\theta$ is the non informative prior (right Haar invariant prior) for $\theta$.
Definition of DB priors for general parameters, formalized in next section, will basically be a
generalization of \eqref{dbscal2}.

\subsection{General parameters}\label{GenPar}
Assume the more general problem (\ref{compWN}) and let
$\pi^N(\n\theta)$ be an objective (usually improper) `estimation'
prior (reference, invariant, Jeffreys, Uniform, ... prior) for
$\n\theta$, and let $\n\xi$ be a transformation such that
$\pi^N(\n\xi)=1$ for $\n\xi=\n\xi(\n\theta)$. We can then derive a
DB prior for  $\n \theta$ by considering $\n\xi$ as a ``location
parameter'', applying the definition \eqref{dbloc}, and transforming
back to $\n\theta$. This transformation was first proposed by
Jeffreys (1961). 
Bernardo (2005) uses it with a reference prior $\pi^N$  for a scalar
$\theta$, and notes that $\xi$ asymptotically behaves as a location
parameter.

 Giving $\n\xi$ a DB prior for location parameters results in:
\begin{equation}\label{lp}
\pi^D(\n\xi)\propto h_{q_*}(\bar{D}^*[\n\xi,\n\xi_0]),
\end{equation}
\noindent where, as before, $\bar{D}^*[\n\xi,\n\xi_0]$ denotes `unit' (symmetrized) discrepancy
between
 $f^*(\n y \mid \n\xi)$ and $f^*(\n y \mid \n\xi_0)$, and $\n\xi_0 = \n\xi(\n\theta_0)$. Hence,
 the corresponding ($DB$) prior for $\n\theta$ is
\begin{equation}\label{mot}
\pi^D(\n\theta)\propto h_{q_*}(\bar{D}^*[\n\xi(\n\theta),\n\xi(\n\theta_0)])\ |{\cal
J}_\theta(\n\theta)|\propto h_{q_*}(\bar{D}[\n\theta,\n\theta_0])\ \pi^N(\n\theta),
\end{equation}
as long as $\pi^N$ is invariant under transformations; ${\cal J}(\n\theta)$ is the jacobian of the
transformation.  It should be noted from \eqref{mot} that the explicit transformation to $\n\xi$
is not needed in order to derive the prior $\pi^D$. We can now formally define a DB prior as
follows:

\begin{deft}\label{GenPiD} {\rm \bf (General DB priors)} For the model selection problem \eqref{compWN},
let $\bar D[\n\theta,\n\theta_0]$ be a unitary measure of
divergence between $f(\n y\mid\n\theta)$ and $f(\n y\mid\n\theta_0)$. Also let $\pi^N(\n\theta)$
be an objective (possibly improper) estimation prior for $\n\theta$ under the complex model, $M_2$, and $h_q
(\cdot)$ be a decreasing function. Define:

$$
\uq=\inf\{q\ge 0:\hspace{.25cm} c(q)<\infty\},\hspace{.5cm} q_*=\uq+1/2,
$$
where $c(q)=\int\, h_q(\bar{D}[\n\theta,\n\theta_0])\,\pi^N(\n\theta)d\n\theta$. If $q_*<\infty$,
then a  divergence based prior under $M_2$ is defined as
\begin{equation} \label{eq:genDB}
\pi^{D}(\n\theta)=c(q_*)^{-1}\,h_{q_*}(\bar{D}[\n\theta,\n\theta_0])\ \pi^N(\n\theta).
\end{equation}
\end{deft}

Note that, by definition, the DB priors either do not exist, or they are proper (and hence they do not
involve arbitrary constants).

\vspace{2mm}

\paragraph{Specific Proposals.} Definition \ref{GenPiD} is very general, in that several definitions of
$\bar D$, $h_q$ and $\pi^N$ could be explored (as well as different choices of $0<\delta<1$ in  $q_*
= \uq +\delta$). We give specific choices which,  in part, are based on previous explorations and
desired properties of the resulting $\pi^D$; however our specific choices are mainly intended to
reproduce JZS  priors in normal scenarios, so that our proposals for DB priors can be best
contemplated as extensions of JZS priors to non-normal scenarios.

In what follows, we take $D$ to be either $D^S$ in \eqref{Ds} or $D^M$ in \eqref{Dm}, and
$h_q(t)=(1+t)^{-q}$. Since we will explore both, we need different notations:

\begin{deft}\label{piDsDm} {\rm \bf (Sum and Minimum DB priors)} The sum DB prior $\pi^S$ and the minimum DB
prior $\pi^M$ are the DB priors given in definition \ref{GenPiD} with $h_q(t)=(1+t)^{-q}$ and $D$
being respectively $D^S$ (see \eqref{Ds}) and $D^M$ (see \eqref{Dm}). When needed, we refer to their
corresponding c's and q's as $c_S, \uq^S, q_*^S$, and $c_M, \uq^M, q_*^M$, respectively.
\end{deft}



It can easily be shown that $c_S(q)\le c_M(q)$, so that, for regular problems (in which
$\bar{D}^S<\infty$), $q_*^M<\infty$ implies $q_*^s<\infty$, and therefore, in these problems,
existence of $\pi^{M}$ implies existence of $\pi^{S}$.

It should be noted that, although we are not explicitly assuming a specific objective prior $\pi^N$
in the definition of DB priors, properties of $\pi^N$  are inherited by the DB prior $\pi^D$; some
properties will be crucial for sensible DB priors, and hence appropriate choice of $\pi^N$ becomes
very important.

We now explore some appealing properties of DB priors. Since these
are common to both proposals in Definition \ref{piDsDm}, we drop
unneeded super and sub indexes and refer to the prior simply as
$\pi^D$. This convention will be kept through the paper; distinction
between $\pi^S$ and $\pi^M$ will only be done when needed.


\paragraph {Local behavior of DB priors.}  It can be easily checked that, when $\pi^N(\n\theta)=1$
(as when $\n\theta$ is a location parameter),  then the mode of $\pi^{D}$ is $\n\theta_0$ (so
$\pi^D$ is `centered' at the simplest model). We can also exploit the following (well known) approximate relationship between Kullback-Leibler divergence and Fisher information (see Kullback, 1968): for $\n\theta$ is in a neighborhood of $\n\theta_0$
$$
KL[\n\theta_0,\n\theta]\approx \frac{1}{2}(\n\theta-\n\theta_0)^t\n
J(\n\theta_0)(\n\theta-\n\theta_0),
$$
where $\n J(\n\theta_0)$ is the  expected
Fisher information matrix evaluated at $\n\theta_0$. Hence, in a neighborhood of $\n\theta_0$, 
the DB priors approximately behave as  $k$ multivariate Student distributions,
centered at $\n\theta_0$, and scaled by Fisher information matrix under the simpler model. That
is,
$$
\pi^{D}(\n\theta)\approx \mbox{St}_k(\n\theta_0,n^*\, \n J(\n\theta_0)^{-1}/d, \, d),
$$
\noindent where $d=2\uq-k+1$. Moreover, by definition of $q_*$, $d$ above is generally close to 1,
and then the DB priors would approximately be Cauchy.

As highlighted in Section~\ref{subsubLM}, the approximation above exactly holds in Normal
scenarios with $d = 1$, and hence the DB priors reproduce precisely the proposals of
Jeffreys-Zellner-Siow.  

%
%

\paragraph{Invariance under one-to-one transformations}
An important question is whether the DB priors are invariant under reparameterizations of the
problem. Suppose that $\n\xi=\n\xi(\n\theta)$ is a one-to-one monotone mapping
$\n\xi:\Theta\rightarrow\Theta_\xi$. The model selection problem (\ref{compWN}) now becomes:
\beq{compWNr} M_1^*:f_1^*(\n{y})=f^*(\n{y}\mid\n\xi_0)\hspace{.25cm}\mbox{vs.} \hspace{.25cm}
M_2^*:f_2^*(\n{y}\mid\n\xi)=f^*(\n{y}\mid\n\xi), \eeq where $f^*(\n y\mid\n\xi(\n\theta))=f(\n
y\mid\n\theta)$ and $\n\xi_0=\n\xi(\n\theta_0)$. The next result shows that, if $\pi^N$ is invariant
under the reparameterization $\n\xi(\n\theta)$ then so are the DB priors.

\begin{prop}\label{inv}
Let $\pi_\theta^{D}(\n\theta)$ and $\pi_\xi^{D}(\n\xi)$ denote the DB priors for the original
(\ref{compWN}), and reparameterized (\ref{compWNr}) problems respectively. If
$\pi_\theta^N(\n\theta)\propto \pi_\xi^N(\n\xi(\n\theta))|{\cal J}_\xi(\n\theta)|$, where ${\cal
J}_\xi$ is the Jacobian of the transformation
then
$$
\pi_\theta^{D}(\n\theta)= \pi_\xi^{D}(\n\xi(\n\theta))|{\cal J}_\xi(\n\theta)|.
$$
\end{prop}
\begin{proof} See Appendix.
\end{proof}

Under the conditions of  Proposition \ref{inv}, Bayes factors computed from DB priors are not
affected by reparameterizations. It is important to note that invariance of DB priors is a direct
consequence of both the invariance of the divergence measure used and the invariance of $\pi^N$. Some
objective priors $\pi^N$ invariant under reparameterizations are Jeffreys' priors and (partially) the
reference priors.

\paragraph{Compatibility with sufficient statistics.} DB priors are sometimes compatible with reduction of the data
via sufficient statistics. This attractive  property is not shared
by other objective Bayesian methods, as intrinsic Bayes factors.

\begin{prop}\label{CSS}
Let $\n t=\n t(\n y)$ be a sufficient statistic for $\n\theta$ in $f(\n y\mid\n\theta)$ with
distribution $f^*(\n t\mid\n\theta)$. Assume that $\pi^N$ and $n^*$ remain the same in the problem
defined by $f^*$, then the DB prior $\pi^{D}$ for the original problem (\ref{compWN}) is the same as the DB prior for the reduced (by sufficienty) testing problem
\begin{equation}\label{suf}
M_1^*:f_1^*(\n{t})=f^*(\n{t}\mid\n\theta_0)\hspace{.7cm}{vs.}\hspace{.7cm}
M_2^*:f_2^*(\n{t}\mid\n\theta)=f^*(\n{t}\mid\n\theta).
\end{equation}
\end{prop}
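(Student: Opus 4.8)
The plan is to reduce the whole statement to a single fact: the directed Kullback--Leibler divergences in \eqref{eq:KL}, and therefore both $D^S$ in \eqref{Ds} and $D^M$ in \eqref{Dm}, are invariant under reduction to a sufficient statistic. Once that is established the proposition is almost immediate from Definition~\ref{GenPiD}, because the only other ingredients entering the DB prior---the estimation prior $\pi^N$, the effective sample size $n^*$, and the function $h_q$---are all assumed (or are) the same when we pass from $f$ to $f^*$.

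First I would invoke the Neyman--Fisher factorization theorem: since $\n t=\n t(\n y)$ is sufficient for $\n\theta$, we may write $f(\n y\mid\n\theta)=g(\n t(\n y),\n\theta)\,h(\n y)$, so that the likelihood ratio $f(\n y\mid\n\theta)/f(\n y\mid\n\theta_0)$ depends on $\n y$ only through $\n t$. Integrating the factorization over the level sets of $\n t$ identifies this ratio with the ratio of the marginal densities, $f^*(\n t\mid\n\theta)/f^*(\n t\mid\n\theta_0)$. Substituting into \eqref{eq:KL}, the integrand $\log f(\n y\mid\n\theta)-\log f(\n y\mid\n\theta_0)$ becomes a function of $\n t$ alone, and since the distribution induced on $\n t$ by $f(\n y\mid\n\theta)$ is exactly $f^*(\n t\mid\n\theta)$, the expectation collapses to
\[
KL[\n\theta_0:\n\theta]=\int \log\frac{f^*(\n t\mid\n\theta)}{f^*(\n t\mid\n\theta_0)}\,f^*(\n t\mid\n\theta)\,d\n t,
\]
which is precisely the directed divergence computed in the reduced problem \eqref{suf}. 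The same argument, now integrating against $f(\n y\mid\n\theta_0)$, gives invariance of $KL[\n\theta:\n\theta_0]$.

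Because both $D^S$ and $D^M$ are built solely from these two directed divergences (a sum in \eqref{Ds}, twice a minimum in \eqref{Dm}), each is unchanged by the reduction; dividing by the common $n^*$ shows that the unitary divergence $\bar D[\n\theta,\n\theta_0]$ is identical in the original and reduced problems. Feeding this into Definition~\ref{GenPiD}, the normalizing integrals $c(q)=\int h_q(\bar D[\n\theta,\n\theta_0])\,\pi^N(\n\theta)\,d\n\theta$ coincide for every $q$ (as $\pi^N$ is assumed unchanged), hence so do $\uq$ and $q_*$, and therefore the full prior $\pi^D(\n\theta)=c(q_*)^{-1}h_{q_*}(\bar D[\n\theta,\n\theta_0])\,\pi^N(\n\theta)$ agrees with its counterpart for \eqref{suf}. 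Since nothing in the argument depended on the choice of $D$, it applies verbatim to both $\pi^S$ and $\pi^M$.

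The only genuinely delicate point---the rest being substitution into the definitions---is the measure-theoretic step identifying the full-data likelihood ratio with the ratio of marginal densities of $\n t$ and justifying that the expectation over $\n y$ reduces to one over $\n t$. For discrete or dominated continuous families this is the standard ``information is invariant under sufficiency'' computation, but stating it cleanly requires care with the dominating measure and with the null sets on which the factorization holds; I would route it through the factorization theorem, as above, rather than through an explicit change of variables, precisely to avoid Jacobian bookkeeping when $\n t$ has lower dimension than $\n y$.
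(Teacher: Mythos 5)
Your proposal is correct and takes essentially the same approach as the paper: the paper's own proof simply asserts the key fact $D^*[\n\theta,\n\theta_0]=D[\n\theta,\n\theta_0]$ (invariance of the symmetrized divergence under reduction to $\n t$) and then concludes from the assumption that $\pi^N$ and $n^*$ are unchanged. You additionally supply, via the factorization theorem, the justification for that invariance which the paper leaves implicit; otherwise the two arguments coincide.
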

\begin{proof} See Appendix.
\end{proof}

\paragraph{DB priors and Jeffreys' general rule.}
Jeffreys (1961) proposed objective proper priors for testing
situations other than the normal mean. Specifically, when $\n y$ is
a random sample of size $n$, and for univariate $\theta$ he proposed
the following model testing prior:
\beq{Jefprior} \pi^J(\theta)=\frac{1}{\pi}\frac{d}{d\theta}\tan^{-1} \left( \frac{D^S[\theta,\theta_0]}{n}\right)^{1/2} =
\frac{1}{\pi} \left(1+ \frac{{D}^S[\theta,\theta_0]}{n}\right)^{-1}\,
\frac{d}{d\theta}\left(\frac{D^S[\theta,\theta_0]}{n}\right)^{1/2}
. \eeq

This reduces to Jeffreys Cauchy proposal when $\theta$ is a normal
mean. Also, when $|\theta-\theta_0|$ is small, $\pi^J(\theta)$ can
be approximated by
\beq{apJefprior} \pi^J(\theta) \approx
\frac{1}{\pi}\big(1+\bar{D}^S[\theta,\theta_0]\big)^{-1}\,
\pi^{NJ}(\theta), \eeq
where $\pi^{NJ}(\theta)$ is Jeffreys' (estimation) prior (i.e. the
squared root of the expected Fisher information).

Note that $\pi^J$ can lead to improper priors and at least in
principle can not be applied for multivariate parameters. However,
the approximation (\ref{apJefprior}) was a main inspiration for the
definition of DB priors, with clear similarities between them.

\section{Comparative examples: simple null}\label{examples}
In the spirit of Berger and Pericchi (2001) we investigate in this
section the performance of DB priors in a series of situations
chosen to be somehow representative of wider classes of statistical
problems. We also explicitly derive well established, alternative
proposals for objective priors in Bayesian hypothesis testing and
compare their performance with that of DB priors. We show that in
simple standard situations, DB priors produce similar results to
these alternative proposals. More interestingly, in more
sophisticated situations where these proposals fail (models with
irregular asymptotics or improper likelihoods), the DB priors are
well defined and very sensible.


We will compute and compare Bayes factors derived with DB priors,
with those derived with two of the most popular general objective priors for
objective Bayes model selection, namely:

\begin{enumerate}
\item Arithmetic intrinsic prior:
$$
\pi^A(\n\theta)=\pi^N(\n\theta)\, E^{M_2}_\theta\, (B_{12}^N(\n y^*)),
$$
where the Bayes factor $B^N$ is computed with the objective estimation prior $\pi^N$, and $\n y^*$ is
an imaginary sample of minimum size such that $0<m_2^N(\n y^*)<\infty$.
\item Fractional intrinsic prior:
$$
\pi^F(\n\theta)=\pi^N(\n\theta)\, \frac{\exp\{m \, E^{M_2}_\theta\log f(y\mid\n\theta_0)\}} {\int\,
\exp\{m \, E^{M_2}_\theta\log f(y\mid\tilde{\n\theta})\} \pi^N(\tilde{\n\theta})d\tilde{\n\theta}}.
$$
\end{enumerate}

In the iid case and asymptotically, $\pi^A$ produces the {\it arithmetic intrinsic Bayes factor}
(Berger and Pericchi, 1996), and $\pi^F$  the {\it fractional Bayes factor} (O'Hagan, 1995) if the
exponent of the likelihood is $b=m/n$ for a fixed $m$ (see De Santis and Spezaferri, 1999).
Following the recommendation of Berger and Pericchi (2001) we take $m$ to be the size of the
minimal training sample $\n y^*$.

In the examples of this Section, $\n y$ is an iid sample of size $n$ from $f(y\mid\theta)$, and
unless otherwise specified, $n^* = n$ ($n^*$ denotes effective sample size). We let $B_{12}^S$ denote the Bayes factor in favor of $H_1$
computed with $\pi^{S}$ (see Definition \ref{piDsDm}); $B_{12}^M, B_{12}^A$ and $B_{12}^F$ are
defined similarly.

\subsection{Bounded parameter space (Example 1)}

We begin with a simple example, in which data is a random sample from a Bernoulli distribution, that
is
$$
f(y\mid\theta)=\theta^{y}(1-\theta)^{1-y}, \hspace{.7cm}y\in\{0,1\},\,\,\,\,\, \theta\in\Theta=[0,1],
$$
and we want to test $M_1: \theta = \theta_0$ versus  $M_1: \theta \neq \theta_0$. The usual estimation objective prior (both reference and Jeffreys)
in this problem is the beta density $\pi^N(\theta)=Be(\theta\mid
1/2,1/2) \propto \theta^{-1/2}(1-\theta)^{-1/2}.$ In this case, since $\pi^N$ is proper, it would be tempting to use it as a testing prior. However, we will see that all $\pi^S, \pi^M, \pi^A$ and $\pi^F$ center around the null value $\theta_0$ whereas the estimation prior completely ignores it.

The DB prior for the sum-symmetrized divergence can be computed to be
$$
\pi^{S}(\theta)\propto
\Big[1+(\theta-\theta_0)\log\frac{\theta(1-\theta_0)}{\theta_0(1-\theta)}\Big]^{-1/2}\,
\pi^N(\theta),
$$
and the DB prior for the min-symmetrized divergence
$$
\pi^{M}(\theta)\propto \Big(1+\bar{D}^M[\theta,\theta_0]\Big)^{-1/2}\,\pi^N(\theta),
$$
where \ba \bar{D}^M[\theta,\theta_0]&=&\Big\{
\begin{array}{ccc}
2\,KL[\theta:\theta_0] & \mbox{if} &
\min\{\theta_0,1-\theta_0\}<\theta<\max\{\theta_0,1-\theta_0\}\\
2\,KL[\theta_0:\theta] & & \mbox{otherwise},
\end{array}
\ea and
$KL[\theta:\theta_0]=\theta_0\log\frac{\theta_0}{\theta}+(1-\theta_0)\log\frac{1-\theta_0}{1-\theta}$.

The intrinsic priors are derived in the next result. The proof is
straightforward and hence it is omitted.
\begin{lem} The arithmetic intrinsic prior is
$$
\pi^A(\theta)=\Big( \frac{2}{\pi} \ (1-\theta_0)(1-\theta)+\theta_0\theta \Big) \, \pi^N(\theta)
$$
and the fractional intrinsic prior is
$$
\pi^F(\theta)= 
\Big( \frac{\theta_0^\theta(1-\theta_0)^{1-\theta}} {\Gamma(\theta+1/2)\Gamma(3/2-\theta)}\Big) \,
\pi^N(\theta) .
$$
\end{lem}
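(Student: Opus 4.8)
The plan is to obtain both priors by direct substitution into the definitions of $\pi^A$ and $\pi^F$ given at the start of Section~\ref{examples}, using the Bernoulli likelihood and the proper reference prior $\pi^N(\theta)\propto\theta^{-1/2}(1-\theta)^{-1/2}$. The whole computation rests on the observation that a single Bernoulli observation $y^*$ is the minimal training sample: one checks that $m_2^N(y^*)=\int_0^1\theta^{y^*}(1-\theta)^{1-y^*}\,\pi^N(\theta)\,d\theta$ is finite and strictly positive for each $y^*\in\{0,1\}$, so that $m=1$ and every integral that appears is an ordinary Beta integral.

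For the arithmetic prior I would first evaluate the two marginals, $m_2^N(1)\propto B(3/2,1/2)$ and $m_2^N(0)\propto B(1/2,3/2)$, both equal to $\pi/2$, and form the training-sample Bayes factor $B_{12}^N(y^*)=f(y^*\mid\theta_0)/m_2^N(y^*)$, which is (proportional to) $2\theta_0/\pi$ at $y^*=1$ and $2(1-\theta_0)/\pi$ at $y^*=0$. Because $y^*$ is binary, the expectation $E^{M_2}_\theta(B_{12}^N(y^*))$ collapses to the two-point average with weights $\theta$ and $1-\theta$; this produces a linear combination of $\theta_0\theta$ and $(1-\theta_0)(1-\theta)$, and multiplying by $\pi^N(\theta)$ gives the stated $\pi^A$.

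For the fractional prior the only expectation needed is of a single log-likelihood: since $E^{M_2}_\theta(y)=\theta$, one has $E^{M_2}_\theta\log f(y\mid\theta_0)=\theta\log\theta_0+(1-\theta)\log(1-\theta_0)$, and exponentiating with $m=1$ gives the numerator $\theta_0^{\theta}(1-\theta_0)^{1-\theta}$ directly. The denominator is the same expression with a dummy $\tilde\theta$ in place of $\theta_0$, integrated against $\pi^N$; recognizing $\int_0^1\tilde\theta^{\theta-1/2}(1-\tilde\theta)^{1/2-\theta}\,d\tilde\theta=B(\theta+1/2,\,3/2-\theta)=\Gamma(\theta+1/2)\Gamma(3/2-\theta)/\Gamma(2)$ and using $\Gamma(2)=1$ yields the quoted $\pi^F$ up to an overall constant (irrelevant, since the prior is determined only up to proportionality). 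There is no real obstacle here — as the authors note, the argument is purely mechanical; the only points requiring a little care are fixing the minimal training sample to size one (the notion is slightly delicate because $\pi^N$ is already proper) and, in the fractional case, correctly tracking the $\theta$-dependent Beta integral in the denominator so that the shifted Gamma factors come out right.
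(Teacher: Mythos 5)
Your derivation is correct and is exactly the direct-substitution computation the paper has in mind (it omits the proof as ``straightforward''): with a size-one training sample, $m_2^N(y^*)\propto B(3/2,1/2)=B(1/2,3/2)=\pi/2$, giving $E_\theta^{M_2}\big(B_{12}^N(y^*)\big)=\tfrac{2}{\pi}\big[\theta_0\theta+(1-\theta_0)(1-\theta)\big]$ (note the lemma's $2/\pi$ is meant to multiply both terms), while the Beta integral $B(\theta+1/2,\,3/2-\theta)=\Gamma(\theta+1/2)\Gamma(3/2-\theta)$ yields the stated $\pi^F$. Your two cautionary remarks are also well placed and consistent with the paper: the size-one training-sample convention is what the paper implicitly uses (even though $\pi^N$ proper would formally admit an empty training sample), and the normalizing constant of $\pi^N$ cancels between numerator and denominator of $\pi^F$, so the displayed formula is in fact exact rather than merely proportional.
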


By construction, $\pi^S$ and $\pi^M$ are proper priors; $\pi^A$ is
proper but $\pi^F$ is not. For instance, for $\theta_0=1/2$, $\pi^F$
integrates to 1.28 and for $\theta_0=3/4$, $\pi^F$  integrates to
1.18. This implies a small bias in the Bayes factor in favor of
$M_2$. In Figure~\ref{DBBern} we display $\pi^{S}, \pi^{M}, \pi^A$
and $\pi^F$ for $\theta_0=1/2$ and $\theta_0=3/4$. They can be seen
to be very similar. When $\theta_0=1/2$ they are also similar to the
objective estimation prior $Be(\theta \mid 1/2, 1/2)$, but not for
other values of $\theta_0$.

We also compute the Bayes factors for the four different priors,
when $\theta_0=1/2$, for two different sample sizes, $n=10$ and
$n=100$, and for different values of the MLE,
$\hat\theta=\sum_{i=1}^{10}\, y_i/n$ (see Table~\ref{TabBern}). All
the results are quite similar. As expected, $B_{12}^F$ gives the
most support to $M_2$; $B_{12}^A$ gives the least. Both DB priors
produce similar results, being slightly closer to $B_{12}^A$ than to
$B_{12}^F$.

Finally, we consider application to real data taken from Conover
(1971). Under the hypothesis of simple Mendelian inheritance, a
cross between two particular plants produces, in a proportion of
$\theta=3/4$ a specie called `giant'. To determine whether this
assumption is true, Conover (1971) crossed $n=925$ pair of plants,
getting $T=682$ giant plants. The Bayes factors in favor of the
Mendelian inheritance hypothesis (simplest model) are also given in
Table~\ref{TabBern} for the four different priors. Again the results
are very similar, the fractional intrinsic prior providing the least
support to $M_1$.

\begin{figure}[!t]
\begin{center}
\begin{tabular}{cc}
\includegraphics[width=175pt,height=120pt]{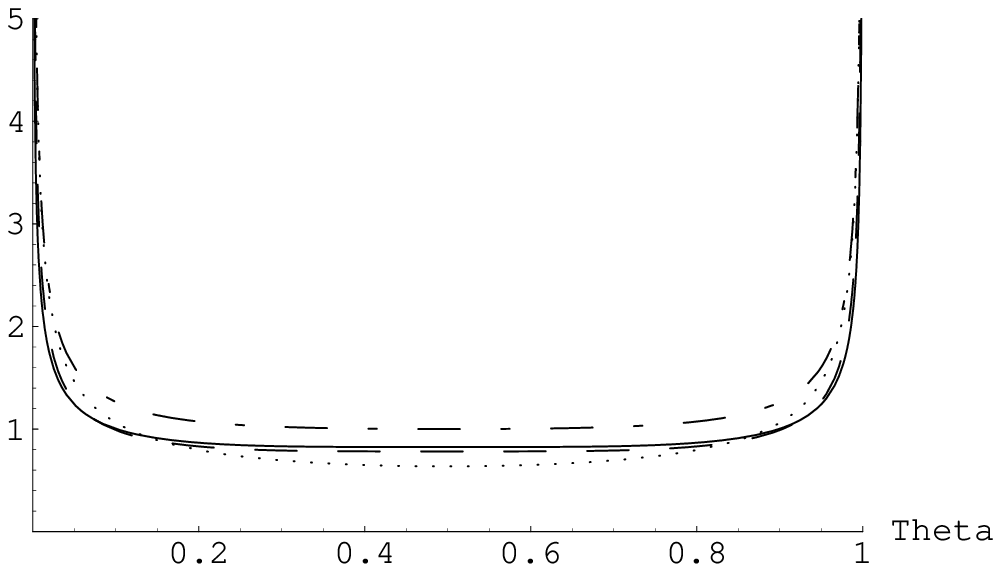} &
\includegraphics[width=175pt,height=120pt]{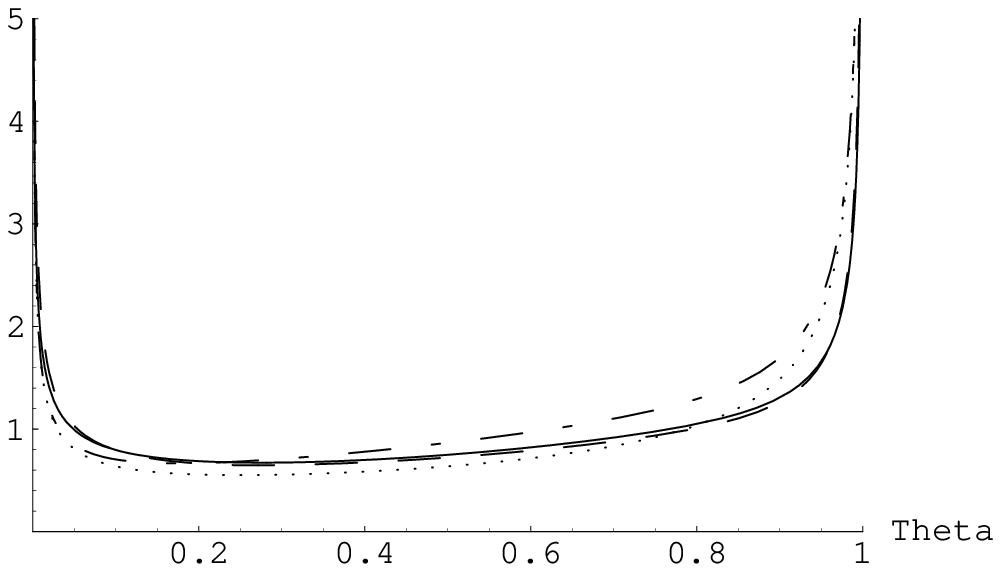}
\end{tabular}
\caption{In Bernoulli example: $\pi^S$ (Solid line), $\pi^M$
(Dot-dashed line), $\pi^A$ (Dots) and $\pi^F$ (Dashed line), for the
case $\theta_0=1/2$ (left) and $\theta_0=3/4$ (right).}
\label{DBBern}
\end{center}
\end{figure}

\begin{table}[t!]
\begin{center}{\small
\begin{tabular}{rc|ccccccc}
$n=10$ & $\hat\theta$ & $B_{12}^S$ & $B_{12}^M$ &  $B_{12}^A$ & $B_{12}^F$\\
\hline
&0.50 & 3.26  & 3.44 & 4.06 & 2.68\\
&0.65 & 2.14  & 2.24 & 2.58 & 1.75\\
&0.80 & 0.55  & 0.57 & 0.60 & 0.44\\
$n=100$ & &&&&&&&\\
&0.50 & 9.74 & 10.28 & 12.56 & 8.03 \\
&0.55 & 5.93 & 6.26 &   7.61 & 4.89  \\
&0.60 & 1.33 & 1.40 &   1.68 & 1.09 \\
Conover & &&&&&&&\\
&     & 19.38 &  20.20 & 20.79 & 16.02 \\
\hline
\end{tabular}\caption{Bayes factors in favor of $M_1$ for Bernoulli testing of $\theta_0=1/2$,
for different values of the MLE and $n=10$, $n=100$. Also, Bayes
factors for Conover data.} \label{TabBern}}
\end{center}
\end{table}

\subsection{Scale parameter (Example 2)}
We next consider another simple example of testing a scale parameter. Specifically, we consider that
data come from the one parameter exponential model with mean $\mu$, that is,
$$
f(y\mid \mu)= Exp \, (y \mid \frac{1}{\mu}) = \frac{1}{\mu}\, \exp\{- \, \frac{y}{\mu}\},\hspace{1cm}
y>0,  \ \ \mu>0,
$$
and that it is desired to test $H_1:\mu=\mu_0 \ $ vs. $ \ H_2:\mu\ne\mu_0$. Here
$\pi^N(\mu)=\mu^{-1}$, and the DB priors are computed to be:
 \ba \pi^{S}(\mu)\propto  \big[1+\frac{(\mu-\mu_0)^2}{\mu\mu_0}\big]^{-1/2}\mu^{-1}, \hspace{.5cm}
\pi^{M}(\mu)\propto \big(1+\bar{D}^M[\mu,\mu_0]\big)^{-3/2}\mu^{-1},
 \ea
 where
 \ba
 \bar{D}^M[\mu,\mu_0]&=&\Big\{
\begin{array}{ccc}
2\,KL[\mu_0:\mu] & \mbox{if} & \mu>\mu_0\\
2\,KL[\mu:\mu_0] & \mbox{if} & \mu\le\mu_0,
\end{array}
 \ea
 and $KL[\mu:\mu_0]=\log(\mu_0/\mu)-(\mu_0-\mu)/\mu_0$.  The intrinsic
priors are given in the next lemma (the proof is straightforward and
is omitted):
\begin{lem} The arithmetic and fractional intrinsic priors are
$$
\pi^A(\mu)={\mu_0}^{-1} (1+\frac{\mu}{\mu_0})^{-2},\hspace{.5cm}
\pi^F(\mu)={\mu_0}^{-1}\exp\{-\frac{\mu}{\mu_0} \}= Exp \, \{\mu \mid \frac{1}{\mu_0}\}.
$$
\end{lem}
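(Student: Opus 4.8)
The plan is to evaluate each intrinsic prior straight from its definition in Section~\ref{examples}, the only preliminary being the size of the minimal training sample. With $\pi^N(\mu)=\mu^{-1}$, the marginal under $M_2$ for a single observation $y^*$ is $m_2^N(y^*)=\int_0^\infty \mu^{-2}\exp\{-y^*/\mu\}\,d\mu$, which under the reciprocal substitution $u=1/\mu$ becomes $\int_0^\infty \exp\{-y^* u\}\,du=1/y^*$, finite and strictly positive. Hence a single imaginary observation already satisfies $0<m_2^N(y^*)<\infty$, so the minimal training sample has size one and $m=1$ in the fractional prior.

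For $\pi^A$ I would first form the training-sample Bayes factor $B_{12}^N(y^*)=m_1^N(y^*)/m_2^N(y^*)$, where $m_1^N(y^*)=f(y^*\mid\mu_0)=\mu_0^{-1}\exp\{-y^*/\mu_0\}$; this gives $B_{12}^N(y^*)=(y^*/\mu_0)\exp\{-y^*/\mu_0\}$. The remaining step is the expectation $E_\mu^{M_2}[B_{12}^N(y^*)]$ taken against $f(y^*\mid\mu)$, which reduces to $\frac{1}{\mu_0\mu}\int_0^\infty y^*\exp\{-y^*(\mu^{-1}+\mu_0^{-1})\}\,dy^*$. Using $\int_0^\infty y\,e^{-ay}\,dy=a^{-2}$ with $a=(\mu+\mu_0)/(\mu_0\mu)$ yields $\mu_0\mu/(\mu+\mu_0)^2$; multiplying by $\pi^N(\mu)=\mu^{-1}$ and rewriting $(\mu+\mu_0)^2=\mu_0^2(1+\mu/\mu_0)^2$ gives $\pi^A(\mu)=\mu_0^{-1}(1+\mu/\mu_0)^{-2}$, as claimed.

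For $\pi^F$ the key quantity is $E_\mu^{M_2}[\log f(y\mid\tilde\mu)]=-\log\tilde\mu-\mu/\tilde\mu$, where I use $E_\mu[y]=\mu$. With $m=1$, the numerator (evaluated at $\tilde\mu=\mu_0$) is $\exp\{-\log\mu_0-\mu/\mu_0\}=\mu_0^{-1}\exp\{-\mu/\mu_0\}$, while the denominator is $\int_0^\infty \tilde\mu^{-2}\exp\{-\mu/\tilde\mu\}\,d\tilde\mu=1/\mu$, again by the substitution $v=1/\tilde\mu$. Dividing and multiplying by $\pi^N(\mu)=\mu^{-1}$ collapses everything to $\pi^F(\mu)=\mu_0^{-1}\exp\{-\mu/\mu_0\}$, i.e. the $Exp(\mu\mid 1/\mu_0)$ density. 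All of this is routine Gamma-integral bookkeeping under the reciprocal substitution; the single point that genuinely needs verifying is that $m_2^N$ is finite for one observation, since this fixes $m=1$ and justifies the use of a single imaginary $y^*$ throughout.
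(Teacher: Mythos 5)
Your proof is correct: the computation of the minimal training sample size ($m=1$, justified by $m_2^N(y^*)=1/y^*$ being finite and positive), the evaluation of $E_\mu^{M_2}[B_{12}^N(y^*)]=\mu_0\mu/(\mu+\mu_0)^2$, and the fractional-prior calculation via $E_\mu^{M_2}[\log f(y\mid\tilde\mu)]=-\log\tilde\mu-\mu/\tilde\mu$ all check out and yield exactly the stated densities. This is precisely the straightforward direct computation from the definitions in Section~\ref{examples} that the paper invokes when it omits the proof as routine.
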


\begin{figure}[!t]
\begin{center}
\begin{tabular}{cc}
\includegraphics[width=175pt,height=80pt]{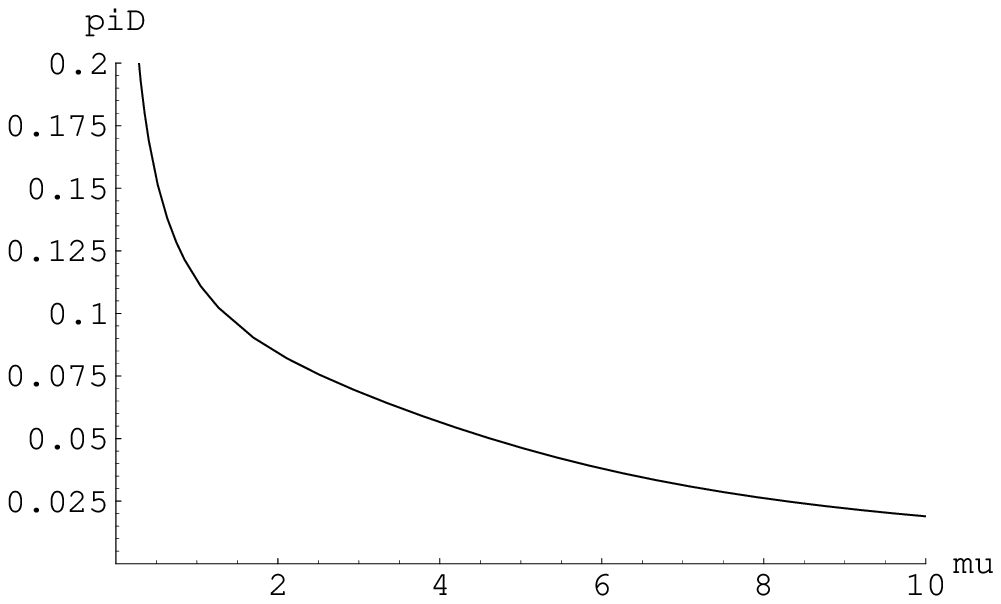} &
\includegraphics[width=175pt,height=80pt]{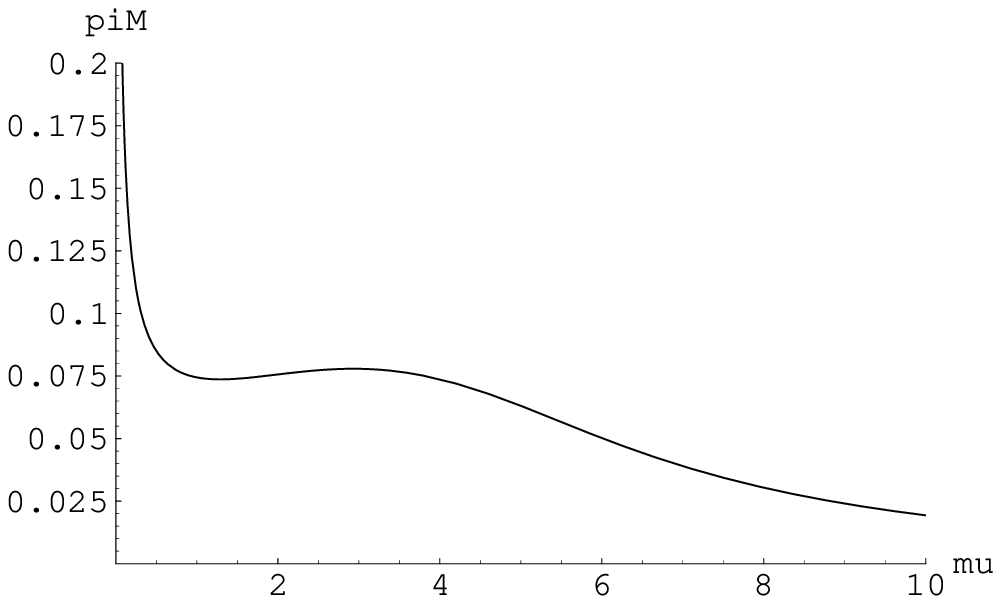}\\
\includegraphics[width=175pt,height=80pt]{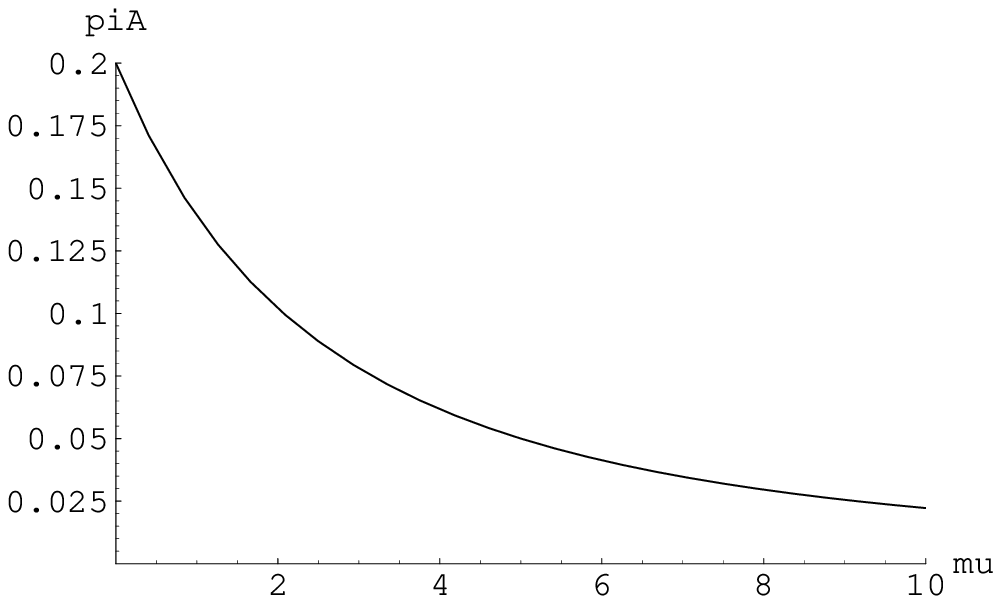} &
\includegraphics[width=175pt,height=80pt]{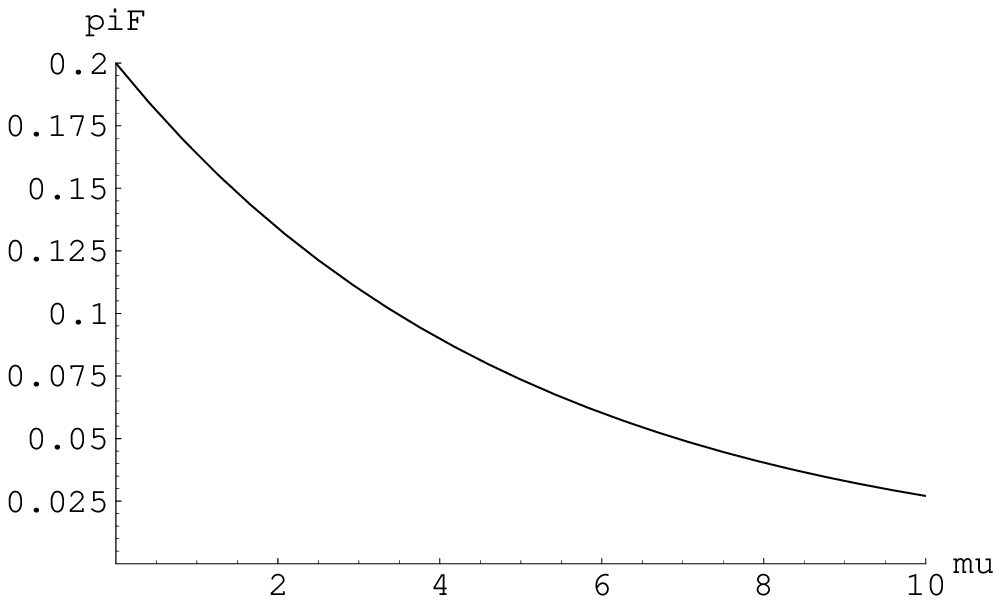}
\end{tabular}
\caption{$\pi^{S}$ (upper left), $\pi^{M}$ (upper right), $\pi^A$
(lower left) and $\pi^F$ (lower right) for the Exponential testing
of $\mu_0=5$.} \label{DBExp}
\end{center}
\end{figure}

The four priors are shown in Figure~\ref{DBExp} when testing $\mu_0=5$. They all have similar shapes,
although that of $\pi^M$ is somehow innusual;  they have some interesting properties:
\begin{enumerate}
    \item In the log scale, both $\pi^{M}$ and $\pi^{S}$ are symmetric around $\log\mu_0$; this is in
    accordance to Berger and Delampady (1987) and Berger and Sellke (1987) proposals, since $\log(\mu)$ is a location parameter.
    \item All four priors are proper.
    \item Neither the arithmetic intrinsic nor the DB priors have moments; the arithmetic fractional has all the moments.
    \item $\pi^{M}$ has the heaviest tails, and $\pi^F$ the thinnest. $\pi^{S}$ has heavier tails than $\pi^{A}$
    \item All four priors are `centered' at the null value $\mu_0$; indeed,  $\mu_0$ is the median of the DB priors
     and of $\pi^A$, and it is the mean of  $\pi^F$.
\end{enumerate}

The four Bayes factors $B_{12}$ in favour of $M_1: \mu = 5$ appear in Table~\ref{TabExp}, for two
values of $n$ ($n=10$ and $n=100$) and some few values of the  MLE $\hat\mu= \sum_{i=1}^n
y_i/n\in\{5,7.5,2.5\}$. We again find very similar results for the different priors,  with $B_{12}^S$
and $B_{12}^{A}$ providing slightly more support to $M_1$ than $B_{12}^M$ and $B_{12}^F$ when data is
compatible with $M_1$.

\begin{table}[t!]
\begin{center}{\small
\begin{tabular}{rc|ccccc}
$n=10$ & $\hat\mu$ & $B_{12}^S$ & $B_{12}^M$ & $B_{12}^A$ & $B_{12}^F$ \\
\hline
&5   & 5.65 &  4.43 & 5.13 & 3.59\\
&7.5 & 2.36 &  2.02 & 2.09 & 1.58\\
&2.5 & 0.95 &  0.88 & 0.82 & 0.59\\
$n=100$ & & & &\\
&5   & 17.28 &  12.81 & 15.98 & 10.89\\
&7.5 & $14.6\times10^{-4}$  & $12.2\times10^{-4}$  & $13\times10^{-4}$ & $9.4\times10^{-4}$\\
&2.5  & $0.86\times10^{-7}$ & $0.83\times10^{-7}$  & $0.73\times10^{-7}$ & $0.54\times10^{-7}$\\
\hline
\end{tabular}\caption{Bayes factors for the exponential testing with $\mu_0=5$ for different values of the
MLE and $n=10$, $n=100$.}\label{TabExp}}
\end{center}
\end{table}

We next investigate a desirable property of Bayes factors which
often fails when they are computed using conjugate priors (see
Berger and Pericchi, 2001). It is natural to expect that, for any
given sample size, $B_{12}\rightarrow 0$ as the evidence against the
simpler model $M_1$ becomes overwhelming. When this property holds,
we say that the Bayes factor is {\it evidence consistent} (or {\it
finite sample} consistent). It is easy to show that, if
$\bar{y}\rightarrow\infty$ then $B_{12}\rightarrow 0$ $\forall n$,
no matter what prior is used to obtain the Bayes factor. The
following lemma provides sufficient conditions for
$B_{12}\rightarrow 0$ as $\bar{y}\rightarrow 0$.

\begin{lem}\label{expCon} Let $B_{12}^\pi$ be the Bayes factor computed with $\pi(\mu)$.
$B_{12}^\pi\rightarrow 0$ as $\bar{y}\rightarrow 0$, for all $n\ge k>0$ if and only if
\begin{equation}\label{condExp}
\int_0^1 \mu^{-k}\, \pi(\mu)\, d\mu=\infty.
\end{equation}
\end{lem}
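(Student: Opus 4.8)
The plan is to reduce everything to the behaviour of the denominator $m_2$ of the Bayes factor as $\bar y\to 0$, since the numerator tends to a strictly positive constant. For an iid exponential sample the likelihood depends on the data only through $\bar y$, namely $f(\n y\mid\mu)=\mu^{-n}\exp\{-n\bar y/\mu\}$, so that $m_1(\n y)=\mu_0^{-n}\exp\{-n\bar y/\mu_0\}\to\mu_0^{-n}>0$. Consequently ``$B_{12}^\pi\to 0$ as $\bar y\to 0$'' is equivalent to ``$m_2(\n y)=\int_0^\infty\mu^{-n}\exp\{-n\bar y/\mu\}\,\pi(\mu)\,d\mu\to\infty$'', and it suffices to characterise when this divergence occurs.

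First I would prove the single-$n$ version: for fixed $n$, $m_2\to\infty$ as $\bar y\downarrow 0$ if and only if $\int_0^1\mu^{-n}\pi(\mu)\,d\mu=\infty$. The key observation is that for each fixed $\mu>0$ the factor $\exp\{-n\bar y/\mu\}$ increases monotonically to $1$ as $\bar y\downarrow 0$. Since the integrand is nonnegative, the Monotone Convergence Theorem yields $\lim_{\bar y\downarrow 0} m_2=\int_0^\infty\mu^{-n}\pi(\mu)\,d\mu$, allowing the value $+\infty$. I would then localise this divergence near the origin: writing $\int_0^\infty=\int_0^1+\int_1^\infty$ and using $\mu^{-n}\le 1$ on $[1,\infty)$ together with properness of $\pi$, the tail $\int_1^\infty\mu^{-n}\pi$ is finite, so $\int_0^\infty\mu^{-n}\pi=\infty$ exactly when $\int_0^1\mu^{-n}\pi=\infty$.

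To obtain the stated ``for all $n\ge k$'' formulation I would use monotonicity in the exponent. On $(0,1)$ we have $\mu^{-n}\ge\mu^{-k}$ whenever $n\ge k$, so if $\int_0^1\mu^{-k}\pi=\infty$ then $\int_0^1\mu^{-n}\pi=\infty$ for every $n\ge k$, and the single-$n$ equivalence gives $B_{12}^\pi\to 0$ for all such $n$. Conversely, specialising the single-$n$ equivalence to $n=k$ shows that if $B_{12}^\pi\to 0$ for all $n\ge k$ then in particular $\int_0^1\mu^{-k}\pi=\infty$, closing the loop.

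The argument is short and the only genuinely analytic step is the interchange of limit and integral, which the Monotone Convergence Theorem handles cleanly because the $\bar y$-dependence enters solely through the increasing factor $\exp\{-n\bar y/\mu\}$. The remaining care is bookkeeping with the quantifier over $n$, and a preliminary check that $m_2(\n y)<\infty$ for every $\bar y>0$ so that the Bayes factor is well defined throughout: near $\mu=0$ the factor $\exp\{-n\bar y/\mu\}$ decays faster than any power and suppresses the $\mu^{-n}$ singularity, while properness of $\pi$ controls the tail.
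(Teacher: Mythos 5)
Your proof is correct, and it rests on the same reduction as the paper's: write $B_{12}^\pi = m_1/m_2$, note $m_1 \to \mu_0^{-n} \in (0,\infty)$ so everything hinges on whether $m_2 \to \infty$, and split $\int_0^\infty = \int_0^1 + \int_1^\infty$ with the tail finite by properness. The genuine difference is in the analytic core. You exploit the fact that $e^{-n\bar{y}/\mu}$ increases monotonically to $1$ as $\bar{y}\downarrow 0$, so a single application of the Monotone Convergence Theorem identifies the limit exactly,
\begin{equation*}
\lim_{\bar{y}\downarrow 0} m_2 = \int_0^\infty \mu^{-n}\,\pi(\mu)\,d\mu \in [0,\infty],
\end{equation*}
from which both implications of the equivalence drop out at once. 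The paper instead treats the two directions asymmetrically: a lower-bound argument for sufficiency, and for necessity a proof by contradiction via the Dominated Convergence Theorem with dominating function $\mu^{-n}\pi(\mu)$. Your route is cleaner, and it also quietly repairs a miswritten step in the paper's converse, which asserts that $\int_0^1 \mu^{-k}\pi\,d\mu < \infty$ implies $\int_0^1 \mu^{-n}\pi\,d\mu < \infty$ ``for $n\ge k$'' --- an implication that runs the wrong way when $n>k$ and is valid only at $n=k$; your explicit specialization of the single-$n$ equivalence to $n=k$ is exactly the correct way to close that direction. (Both proofs, yours and the paper's, implicitly require $n=k$ to be an admissible sample size, i.e.\ $k$ an integer; for non-integer $k$ the converse can fail.)
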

\begin{proof} See Appendix.\end{proof}
It follows that all four priors considered produce \emph{evidence consistent} Bayes factors for
all $n\ge 1$. Evidence consistency provides further insight into the behaviour of the DB priors.
Indeed, we recall that in the general definition of DB priors we used the power $\uq+\delta$, and
then we recommended the specific choice $\delta = .5$ . Interestingly, if $\delta>1$ is used
instead, then $\pi^{S}$ would not be evidence consistent as $\bar{y}\rightarrow 0$.

Last, we study the behavior of $B_{12}$ as the evidence in favor of $M_1$ grows (that is, as
$\bar{y}\rightarrow\mu_0$). For this example, it is easy to show that, when
$\bar{y}\rightarrow\mu_0$, $B_{12}$ grows to a constant, $B_{12}^0(n,\pi)$ say, that depends only
on $n$ and the prior used. Of course, it then follows from the dominated convergence theorem that
$B_{12}^0(n,\pi)\rightarrow\infty$ with $n$, but this also follows from general consistency of
Bayes factors (for proper, fix priors), so it is not very interesting. Of more interest for our
comparison is to study how fast $B_{12}^0(n,\pi_2)$ goes to $\infty$. In Figure~\ref{B0Exp} we
show $B_{12}^0(n,\pi)$ for the four priors considered. It can be seen that $\pi^{S}$ is the one
producing the largest values of $B_{12}^0$ for all values of $n$, with those for $\pi^{A}$
following very closely.

\begin{figure}[!t]
\centering
\includegraphics[width=190pt,height=120pt]{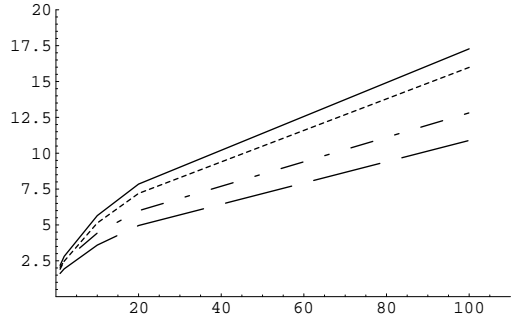}
\caption{Upper bounds $B_{12}^0(n,\pi)$ of Bayes factors as a function of $n$  for the priors
$\pi^{S}$ (Solid line), $\pi^{M}$ (Dot-dashed line), $\pi^F$ (Dashed Line), and $\pi^A$ (Dots).}
\label{B0Exp}
\end{figure}

\subsection{Location-scale (Example 3)}

DB priors are defined in general for vector parameters $\n\theta$.
As an illustration, we next consider a most popular example, namely
the normal distribution;  here the 2-dimensional $\n\theta$ has two
components of different nature (location and scale). Specifically,
assume that
$$
f(y\mid\mu,\sigma)=N(y\mid\mu,\sigma^2),
$$
and that we want to test $M_1:(\mu,\sigma)=(\mu_0,\sigma_0)$ versus
$M_2:(\mu,\sigma)\ne(\mu_0,\sigma_0)$. This hypothesis testing
problem occurs often in \emph{statistical process control}, where a
production process is considered `in control' if its production
outputs have a specified mean and standard deviation (the so called
{\it nominal values}); the question of interest is whether the
process is in control, that is, whether the mean and variance are
equal to the nominal values.

To compute the DB priors we use the reference prior $\pi^N(\mu,\sigma)=\sigma^{-1}$; for the
sum-DB prior we get:
$$
\pi^{S}(\mu,\sigma)=\pi^{S}(\sigma)\, \pi^{S}(\mu\mid\sigma), \hspace{.5cm}
\pi^{S}(\sigma)\propto\frac{\sigma} {(\sigma_0^4+\sigma^4)^{1/2}(\sigma_0^2+\sigma^2)^{1/2}},
$$
and
$$ \pi^{S}(\mu\mid\sigma)=\mbox{Ca}(\mu\mid \mu_0,\Sigma),\hspace{1cm}
\Sigma=\frac{\sigma_0^4+\sigma^4}{\sigma_0^2+\sigma^2},
$$
where $Ca$ represents the Cauchy density.
In this example, the minimum-DB prior $\pi^M$ does not exist, since
$\uq^M=\infty$. It can be checked that $\pi^{S}(\mu\mid\sigma)$ is
symmetric around $\mu_0$, which is a location parameter in
$\pi^{S}(\mu\mid\sigma)$; $\sigma_0$ is a scale parameter in
$\pi^{S}(\sigma)$. The joint density $\pi^{S}$ is shown in
Figure~\ref{DBNor}.

\begin{figure}[!t]
\centering
\includegraphics[width=190pt,height=120pt]{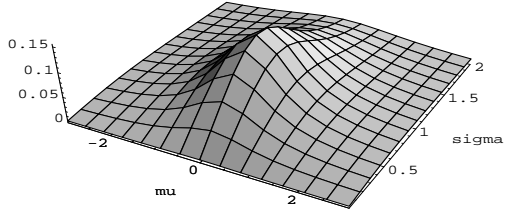}
\caption{$\pi^{S}$ for the Normal problem, with $\mu_0=0,\sigma_0=1$.} \label{DBNor}
\end{figure}

The intrinsic priors, which have simpler forms and thinner tails,
are derived next (the proof is omitted):

\begin{lem} The arithmetic intrinsic prior is
$$
\pi^A(\mu,\sigma)=\pi^A(\sigma)\, \pi^A(\mu\mid\sigma),\hspace{.5cm}
\pi^A(\sigma)=\frac{2}{\pi}\frac{\sigma_0}{\sigma^2+\sigma_0^2}, \hspace{.5cm}
\pi^A(\mu\mid\sigma)=N(\mu\mid\mu_0,\frac{\sigma^2+\sigma_0^2}{2}),
$$
and the fractional intrinsic prior is
$$
\pi^F(\mu,\sigma)=N^+(\sigma\mid 0, \frac{\sigma_0^2}{2}) \  N(\mu\mid\mu_0,\frac{\sigma_0^2}{2}),
$$
where $N^+$ stands for the normal density truncated to the positive real line.
\end{lem}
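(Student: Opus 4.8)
The plan is to obtain both priors by direct substitution into their defining formulas, the only preliminary work being to identify the minimal training sample and, for the arithmetic case, the reference marginal $m_2^N$. For the normal model with $\pi^N(\mu,\sigma)=\sigma^{-1}$, a single observation leaves $\int\sigma^{-1}d\sigma$ divergent, whereas two distinct observations make $m_2^N$ finite; hence the minimal training sample has size $n^*=m=2$, which I would fix at the outset and use in both derivations.

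For the arithmetic prior I would first compute $m_2^N(\mathbf{y}^*)$ for $\mathbf{y}^*=(y_1^*,y_2^*)$: integrating out $\mu$ (a Gaussian integral) and then $\sigma$ (via $u=1/\sigma$) gives the clean form $m_2^N(\mathbf{y}^*)=1/(2|y_1^*-y_2^*|)$. Writing $B_{12}^N(\mathbf{y}^*)=\prod_i N(y_i^*\mid\mu_0,\sigma_0^2)/m_2^N(\mathbf{y}^*)$, the required quantity is $E^{M_2}_{\mu,\sigma}[B_{12}^N(\mathbf{y}^*)]$ with $y_1^*,y_2^*\stackrel{\text{iid}}{\sim}N(\mu,\sigma^2)$. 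The device that makes this tractable is the linear change of variables $u=(y_1^*+y_2^*)/2$, $v=y_1^*-y_2^*$ (Jacobian $1$), under which both sums of squares split as $2(u-\cdot)^2+v^2/2$ and the integral factors into a $u$-part and a $v$-part. The $u$-integral is a product of two Gaussians, evaluated by $\int N(u\mid a,s)N(u\mid b,t)\,du=N(a\mid b,s+t)$; the $v$-integral is the elementary $\int|v|e^{-cv^2}dv=1/c$. Multiplying by $\pi^N(\mu,\sigma)=\sigma^{-1}$ and collecting constants yields $\pi^A(\mu,\sigma)\propto\sigma_0(\sigma^2+\sigma_0^2)^{-3/2}\exp\{-(\mu-\mu_0)^2/(\sigma^2+\sigma_0^2)\}$, which I would then factor to read off $\pi^A(\sigma)=\tfrac{2}{\pi}\sigma_0/(\sigma^2+\sigma_0^2)$ (a proper half-Cauchy) and $\pi^A(\mu\mid\sigma)=N(\mu\mid\mu_0,(\sigma^2+\sigma_0^2)/2)$.

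For the fractional prior I would start from $E^{M_2}_{\mu,\sigma}[\log f(y\mid\tilde\mu,\tilde\sigma)]$, using $E_{\mu,\sigma}[(y-\tilde\mu)^2]=\sigma^2+(\mu-\tilde\mu)^2$, to get $-\tfrac12\log(2\pi\tilde\sigma^2)-[\sigma^2+(\mu-\tilde\mu)^2]/(2\tilde\sigma^2)$. With $m=2$, the numerator (set $\tilde\theta=\theta_0$) is $(2\pi\sigma_0^2)^{-1}\exp\{-[\sigma^2+(\mu-\mu_0)^2]/\sigma_0^2\}$. The crucial point is that the denominator $\int\exp\{2E^{M_2}_{\mu,\sigma}\log f(y\mid\tilde\mu,\tilde\sigma)\}\tilde\sigma^{-1}d\tilde\mu\,d\tilde\sigma$ depends on $(\mu,\sigma)$ through the expectation; integrating out $\tilde\mu$ (Gaussian) and then $\tilde\sigma$ (again $u=1/\tilde\sigma$) gives exactly $1/(4\sigma)$. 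Dividing and multiplying by $\pi^N=\sigma^{-1}$, the factor $\sigma^{-1}$ cancels against the $\sigma$ from the denominator, leaving $\pi^F(\mu,\sigma)\propto\exp\{-\sigma^2/\sigma_0^2\}\exp\{-(\mu-\mu_0)^2/\sigma_0^2\}$, i.e. $N^+(\sigma\mid0,\sigma_0^2/2)\,N(\mu\mid\mu_0,\sigma_0^2/2)$.

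The computations are otherwise routine, so the main obstacle is bookkeeping rather than hard analysis. In the arithmetic case it is keeping the non-Gaussian $|v|$ integral separate from the Gaussian $u$-integral after the change of variables; in the fractional case it is recognizing that the normalizing denominator is itself a function of $(\mu,\sigma)$ and that its $\sigma$-dependence is precisely what removes the $\sigma^{-1}$ in $\pi^N$, so that $\pi^F$ is a genuine product of a half-normal and a normal rather than carrying a spurious $\sigma^{-1}$. Verifying $n^*=m=2$ and tracking all multiplicative constants to match the stated normalizations (the $2/\pi$ in $\pi^A(\sigma)$ and the half-normal factor in $\pi^F$) are the remaining places where care is needed.
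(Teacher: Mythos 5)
Your proposal is correct: the paper omits this proof as straightforward, and your argument is precisely the intended direct computation from the definitions of $\pi^A$ and $\pi^F$ given in Section 3, with the right minimal training sample size $m=2$, the correct marginal $m_2^N(\n y^*)=1/(2|y_1^*-y_2^*|)$, and constants that check out exactly (your $\frac{2\sigma_0}{\pi^{3/2}}(\sigma^2+\sigma_0^2)^{-3/2}\exp\{-(\mu-\mu_0)^2/(\sigma^2+\sigma_0^2)\}$ factors precisely into the stated half-Cauchy times $N(\mu\mid\mu_0,(\sigma^2+\sigma_0^2)/2)$, and your fractional computation reproduces $N^+(\sigma\mid 0,\sigma_0^2/2)\,N(\mu\mid\mu_0,\sigma_0^2/2)$). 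No gaps.
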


The intrinsic priors are proper; also, as with the sum-DB prior,  $\mu_0$ and $\sigma_0$ are location
and scale parameters for $\mu\mid \sigma$ and $\sigma$ respectively. Under the fractional intrinsic
prior $\pi^F$, $\mu$ and $\sigma$ are independent a priori.

Values of $B_{12}$  for all three priors and different values of the sufficient statistic
($\bar{y}$, $S$) are given in Table~\ref{LSTable} when  $(\mu_0,\sigma_0)=(0,1)$. The Bayes
factors corresponding to the different priors can be seen to be quite similar, specially, once
again, $B_{12}^S$ and $B_{12}^A$.

For the three priors, we display in Figure~\ref{margMP}  the
marginal distributions of $\sigma$ and in Figure~\ref{condMP}, the
conditional distributions of $\mu$ given $\sigma$. It can clearly be
seen
 that $\pi^F(\sigma)$  has thinner tails than  $\pi_2^A$ and $\pi_2^{S}$
(recall, thicker tails seem to perform better for testing). Also,
all conditional priors for $\mu$ are symmetric around their mode
$\mu_0$, with  $\pi^{S}(\mu\mid\sigma)$ having the heaviest tails.

\begin{table}[t!]
\begin{center}{\footnotesize
\begin{tabular}{c|ccc|ccc|ccc}
& \multicolumn{3}{c}{$\bar{y}=0$} & \multicolumn{3}{c}{$\bar{y}=1$} &
\multicolumn{3}{c}{$\bar{y}=2$}\\
\hline
& $B_{12}^s$ & $B_{12}^A$ & $B_{12}^F$ & $B_{12}^s$ & $B_{12}^A$ & $B_{12}^F$ & $B_{12}^s$ & $B_{12}^A$ & $B_{12}^F$ \\
$S=0.5$ & 2.30 & 1.35 & 0.70 & 0.03 & 0.02 & 0.01 & $3\cdot 10^{-8}$ & $4\cdot 10^{-8}$ &
$6\cdot 10^{-8}$\\
$S=1$ & 18.67 & 18.55 & 11.72 & 0.21 & 0.19 & 0.18 & $1\cdot 10^{-7}$ & $2\cdot 10^{-7}$ &
$6\cdot 10^{-7}$\\
$S=2$ & 0.006 & 0.006 & 0.017 & $5\cdot 10^{-5}$ & $5\cdot 10^{-5}$ & $21\cdot 10^{-5}$ &
$2\cdot 10^{-11}$ & $2\cdot 10^{-11}$ & $41\cdot 10^{-11}$\\
\hline
\end{tabular}\caption{For multidimensional parameter problem $(\mu_0=0,\sigma_0=1)$,
values of $B_{12}$ for different values of ($\bar{y},S$) with $n=10$.}\label{LSTable}}
\end{center}
\end{table}

\begin{figure}[!t]
\centering
\includegraphics[width=190pt,height=120pt]{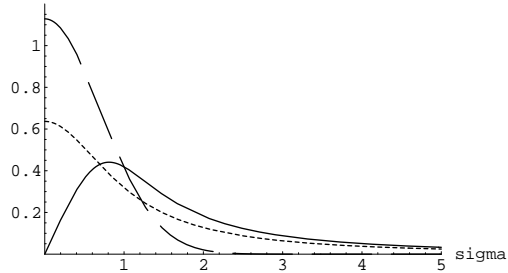}
\caption{Marginal distributions of $\sigma$ when $(\mu_0,\sigma_0)=(0,1)$; $\pi_2^{S}(\sigma)$
(solid line), $\pi_2^A(\sigma)$ (dots), and  $\pi_2^F(\sigma)$ (dashed line). The pair
(mode,median) for these priors are (0.81,1.56) for $\pi^D$, (0,1) for $\pi^A$, and (0,0.48) for
$\pi^F$.} \label{margMP}
\end{figure}

\begin{figure}[!t]
\centering
\begin{tabular}{cc}
\includegraphics[width=190pt,height=120pt]{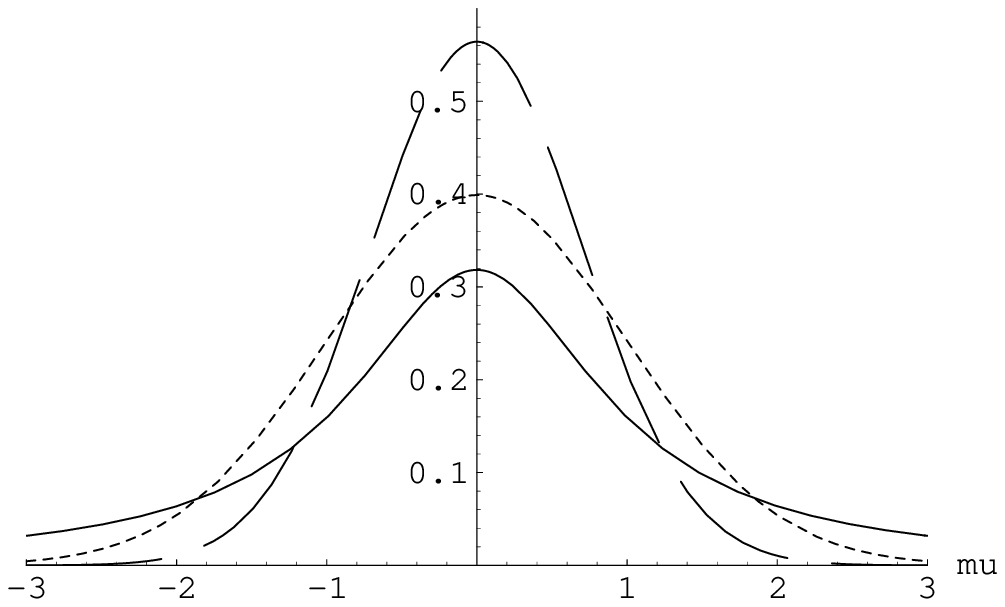} &
\includegraphics[width=190pt,height=120pt]{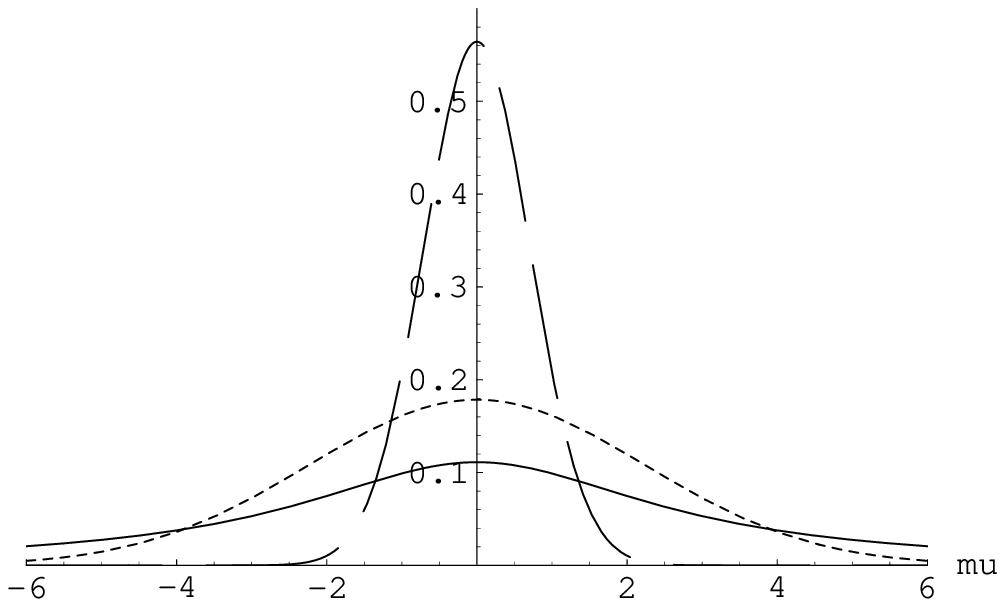}
\end{tabular}
\caption{ Conditional distributions of $\mu$ given $\sigma=1$ (left)
and $\sigma=3$ (right) when $(\mu_0,\sigma_0)=(0,1)$; $\pi^{S}$
(solid),  $\pi^A$ (dots), and  $\pi^F$ (dashed).}\label{condMP}
\end{figure}


With respect to the evidence consistency of the Bayes factors, it is easy to show that when either
$\bar{y}\rightarrow\infty$, $\bar{y}\rightarrow-\infty$ or $S\rightarrow\infty$ (the evidence
against $M_1$ is very strong), then $B_{12}\rightarrow 0$, $\forall n$ and for the three priors
considered. When the evidence in favor of  $M_1$ is largest (that is, $(\bar{y},S)\rightarrow
(\mu_0,\sigma_0)$) it can be seen (with a change of variables) that the Bayes factor in favor
of $M_1$, grows to $B_{12}^1(n,\pi)$
$$
B_{12}^1(n,\pi)=\int \beta^{-n}\exp\{-n  \,  \frac{1+\beta^2(\alpha^2-1)}{2\beta^2}\}\,
\pi_*^{j}(\alpha,\beta),
$$
a function only of $n$ and the prior ($j=A, F, S$) used. For the arithmetic intrinsic prior and fractional priors, the mixing densities $\pi_*^{j}$ are:
$$
\pi_*^A(\alpha,\beta)=\frac{2\beta}{\pi^{3/2}(1+\beta^2)^{3/2}}\exp\{-\frac{\alpha^2\beta^2}{1+\beta^2}\},\hspace{.5cm}
\pi_*^F(\alpha,\beta)=\frac{2\beta}{\pi}\exp\{-\beta^2(1+\alpha^2)\},
$$
and for the sum DB prior:
$$
\pi_*^S(\alpha,\beta)=\frac{\beta^2}{\pi\kappa}\big(1+\beta^4+\beta^2\alpha^2(1+\beta^2)\big)^{-1},\hspace{.5cm}
\kappa=\int s((1+s^4)(1+s^2))^{-1/2}\, ds.
$$
Figure \ref{B1NMu} illustrates the rate at which $B_{12}^1(n,\pi)\rightarrow\infty$ as $n\rightarrow\infty$. It can be clearly seen that, as in the previous example, DB and
intrinsic prior behave very similarly, being more sensitive to the evidence in favor $M_1$
than the fractional prior, subtantially so unless $n$ is very small.


\begin{figure}[!t]
\centering
\includegraphics[width=190pt,height=120pt]{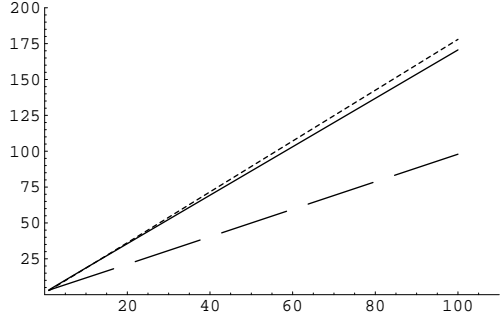}
\caption{Upper bounds $B_{12}^1(n,\pi)$ of Bayes factors as a function of $n$  for the priors
$\pi^{S}$ (Solid line), $\pi^{M}$ (Dot-dashed line), $\pi^F$ (Dashed Line), and $\pi^A$ (Dots).}
\label{B1NMu}
\end{figure}

Finally we compare the behavior of  the three priors in a real
example taken from Montgomery (2001). The example refers to
controlling the piston ring for an automotive engine production
process. The process was considered to be \emph{in control} if the
mean and the standard deviation of the inside diameter (in
millimeters) of the pistons were $\mu_0=74.001$ and
$\sigma_0=0.0099$. At some specific time, the following sample was
taken from the process:
$$
74.035,74.010,74.012,74.015,74.026\, ,
$$
and it had to be checked whether the process was in control. Bayes factors are given in
Table~\ref{NMuSig}.  $B_{12}^F$ provides about twice more support to $M_1$ than $B_{12}^S$ and
$B_{12}^A$, which are very similar to each other.

\begin{table}[h!]
\begin{center}{\small
\begin{tabular}{ccc}
$B_{12}^S$ & $B_{12}^A$ & $B_{12}^F$ \\
\hline
0.004 & 0.005 & 0.011\\
\hline
\end{tabular}\caption{Bayes factors $B_{12}$ for Montgomery (2001) example.}
\label{NMuSig}}
\end{center}
\end{table}

\subsection{Irregular models (Example 4)}
There is an important class of models for which the parameter space is constrained by the data. These
models do not have regular asymptotics and hence solutions based on asymptotic theory (like the
Bayesian information criteria, BIC) do not apply. Moreover, these models are very challenging for the
intrinsic approach; indeed, as discussed in  Berger and Pericchi (2001), the fractional Bayes factor
is completely unreasonable (and hence the fractional intrinsic prior is useless), and the arithmetic
intrinsic prior (which was only derived for the one side problem) is ``something of a conjecture''
(authors' verbatim). We take here the simplest such models, namely an exponential distribution with
unknown location. Accordingly, assume that
$$
f(y\mid\theta)=\exp\{-(y-\theta)\},\hspace{1cm}y>\theta,
$$
and that it is wanted to test $H_1:\theta=\theta_0$ vs.  $H_2:\theta\ne\theta_0$. To the best of our
knowledge, no objective priors have been proposed for this testing problem in the literature.

In these situations, the sum-symmetrized kulback-Leibler divergence $D^S[\theta, \theta_0]$ is
$\infty$, so we have to use the minimum. It can be checked that
$\bar{D}^M[\theta,\theta_0]=2|\theta-\theta_0|$, a well defined divergence. Also,
$\pi^N(\theta)=1$ since $\theta$ is a location parameter. The Minimum DB prior is then given by
$$
\pi^{M}(\theta)= \frac{1}{2}\big(1+2|\theta-\theta_0|\big)^{-3/2},\hspace{1cm} \theta\in{\cal R},
$$
which is symmetric with respect to $\theta_0$ (as expected, since  $\theta$ is a location
parameter); also, $\pi^{M}$ has no moments. Figure~\ref{DBIrr} (left) shows  $\pi^{M}(\theta)$
when $\theta_0 = 0$.

We next investigate the {\it evidence} consistency for any $n$.  The
sufficient statistic is $T=\min\{y_1,\ldots,y_n\}$. It is trivially
true that $B_{12}\rightarrow 0$, as $T\rightarrow-\infty$  for any
(proper) prior (in fact, $B_{12}=0$ for $T < \theta_0$). The next lemma provides a sufficient condition on
the prior to produce evidence consistency  $\forall n$, as
$T\rightarrow\infty$.

\begin{lem}\label{ecIM} Let $\pi(\theta)$ be any proper prior (on $M_2$) and $B_{12}^\pi$ be the corresponding
Bayes factor. If for some integer $k>0$
 \beq{irrcs}
  \int_{\theta_0}^\infty e^{k\theta}\, \pi(\theta)\,
d\theta=\infty,
 \eeq
 then $B_{12}^\pi\rightarrow 0$ as $T\rightarrow \infty$
$\forall n\ge k$.
\end{lem}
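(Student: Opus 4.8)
The plan is to write the Bayes factor $B_{12}^\pi$ explicitly as a function of the sufficient statistic $T$ and show it tends to $0$ as $T\to\infty$ precisely when \eqref{irrcs} holds. For the irregular exponential model, the likelihood based on an i.i.d. sample is $f(\n y\mid\theta)=e^{-\sum(y_i-\theta)}\mathbf{1}_{\{T>\theta\}}=e^{n\theta}e^{-\sum y_i}\mathbf{1}_{\{\theta<T\}}$, where $T=\min_i y_i$. Under $M_1$ this is $e^{n\theta_0}e^{-\sum y_i}\mathbf{1}_{\{\theta_0<T\}}$, and since we consider $T\to\infty$ we have $T>\theta_0$, so $m_1(\n y)=e^{n\theta_0}e^{-\sum y_i}$. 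The common factor $e^{-\sum y_i}$ cancels in the ratio, so that
$$
B_{12}^\pi=\frac{m_1(\n y)}{m_2(\n y)}=\frac{e^{n\theta_0}}{\int_{\theta_0}^{T} e^{n\theta}\,\pi(\theta)\,d\theta},
$$
because only $\theta$ with $\theta_0<\theta<T$ contribute to the integral in $m_2$ (the prior is supported on $M_2$, i.e. $\theta\ne\theta_0$, and the indicator forces $\theta<T$; the lower limit may be taken as $\theta_0$ since $\pi$ is proper and the single point contributes nothing).

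**Main argument.** With this closed form, $B_{12}^\pi\to 0$ as $T\to\infty$ is equivalent to the denominator $\int_{\theta_0}^{T} e^{n\theta}\,\pi(\theta)\,d\theta\to\infty$. The numerator $e^{n\theta_0}$ is a fixed constant in $T$, so everything reduces to divergence of this increasing integral. By the monotone convergence theorem, $\int_{\theta_0}^{T} e^{n\theta}\pi(\theta)\,d\theta\to\int_{\theta_0}^{\infty} e^{n\theta}\pi(\theta)\,d\theta$, so the limit is infinite if and only if $\int_{\theta_0}^{\infty} e^{n\theta}\pi(\theta)\,d\theta=\infty$. This is exactly condition \eqref{irrcs} with $k=n$. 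The lemma asserts the conclusion for all $n\ge k$ under the hypothesis for a single $k$; this follows by monotonicity in the exponent: for $n\ge k$ and $\theta\ge\theta_0$ we have $e^{n\theta}\ge e^{k\theta}e^{(n-k)\theta_0}$, hence
$$
\int_{\theta_0}^{\infty} e^{n\theta}\pi(\theta)\,d\theta\ \ge\ e^{(n-k)\theta_0}\int_{\theta_0}^{\infty} e^{k\theta}\pi(\theta)\,d\theta=\infty,
$$
so divergence for exponent $k$ forces divergence for every $n\ge k$, giving evidence consistency throughout that range.

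**Where the difficulty lies.** The analytic content is light once the Bayes factor is written down; the one point requiring care is the derivation of the explicit form of $B_{12}^\pi$, in particular correctly tracking the support constraint $\theta<T$ in the likelihood and confirming that the numerator's indicator is satisfied for large $T$. The main conceptual obstacle is recognizing that the data constrains the effective range of integration in $m_2$ to $(\theta_0,T)$, which is the feature distinguishing this irregular model from regular ones and which makes the growth of the truncated integral in $T$—rather than any asymptotic expansion—the governing quantity. After that, the equivalence is a direct application of monotone convergence and a trivial exponent comparison, and no delicate estimates are needed.
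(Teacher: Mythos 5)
Your overall strategy is the same as the paper's: write the Bayes factor explicitly as a function of $T$, let $T\to\infty$ via monotone convergence, and pass from exponent $k$ to all $n\ge k$ through the pointwise bound $e^{n\theta}\ge e^{(n-k)\theta_0}e^{k\theta}$ on $\{\theta\ge\theta_0\}$. (That last step of yours is actually cleaner than the paper's, which asserts $\int_{-\infty}^\infty e^{n\theta}\pi\,d\theta\ge\int_{-\infty}^\infty e^{k\theta}\pi\,d\theta$, an inequality that is not pointwise valid for $\theta<0$.) However, there is one genuine error in your derivation of the closed form. The lemma is stated for the two-sided problem, where the prior under $M_2$ is supported on all of ${\cal R}\setminus\{\theta_0\}$; for instance, the paper's $\pi^{M}$ is symmetric about $\theta_0$ and puts half its mass below $\theta_0$. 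Saying ``the prior is supported on $M_2$, i.e.\ $\theta\ne\theta_0$'' justifies deleting the single point $\theta_0$, not the entire ray $(-\infty,\theta_0)$. The correct expression, for $T>\theta_0$, is
$$
B_{12}^\pi=\frac{e^{n\theta_0}}{\int_{-\infty}^{T}e^{n\theta}\,\pi(\theta)\,d\theta}\,,
$$
so your formula with lower limit $\theta_0$ is an upper bound for $B_{12}^\pi$, not an identity.

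Fortunately the error points in the favorable direction and costs you nothing: the denominator you use is smaller than the true one, so divergence of $\int_{\theta_0}^{T}e^{n\theta}\pi(\theta)\,d\theta$ still forces $B_{12}^\pi\to 0$, and your main argument goes through verbatim once restated as an inequality. Even your claimed equivalence survives, because the piece you dropped satisfies $\int_{-\infty}^{\theta_0}e^{n\theta}\pi(\theta)\,d\theta\le e^{n\theta_0}<\infty$ by properness of $\pi$, so the truncated and the full integrals diverge together. But as written, the assertion that ``only $\theta$ with $\theta_0<\theta<T$ contribute to $m_2$'' is false for exactly the priors the lemma is applied to in the paper, and you should replace it either by the upper-bound reading or by the finiteness remark above.
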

\begin{proof} See Appendix.\end{proof}
It follows from the previous lemma that $\pi^{M}$ produces \emph{evidence consistent} Bayes
factors $\forall n\ge 1$. We next investigate the situation for increasing evidence {\it in favor}
of $M_1$, that is, as $T\rightarrow \theta_0^+$. Let

\begin{figure}[!t]
\centering
\begin{tabular}{cc}
\includegraphics[width=190pt,height=120pt]{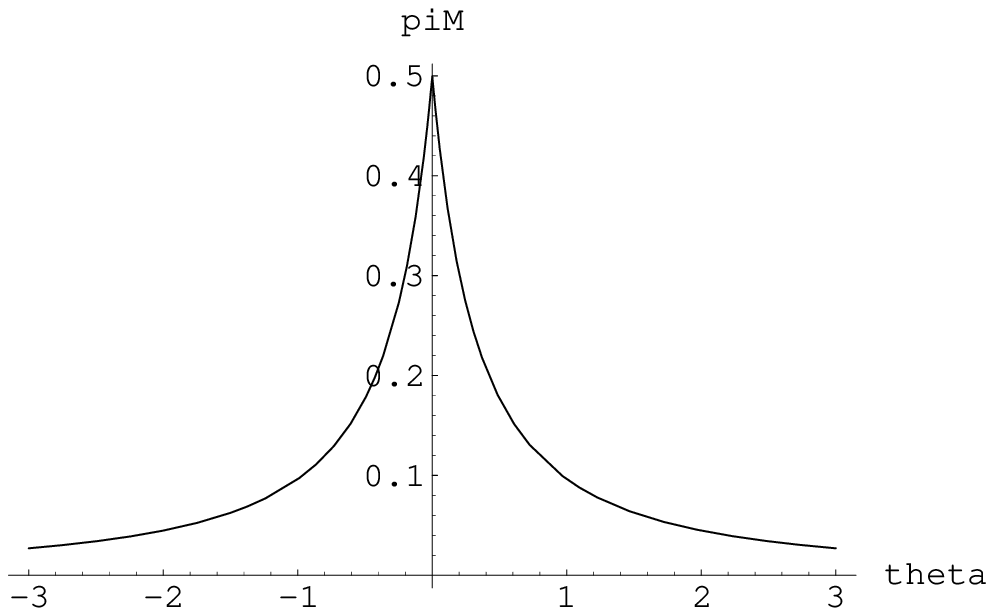} &
\includegraphics[width=190pt,height=120pt]{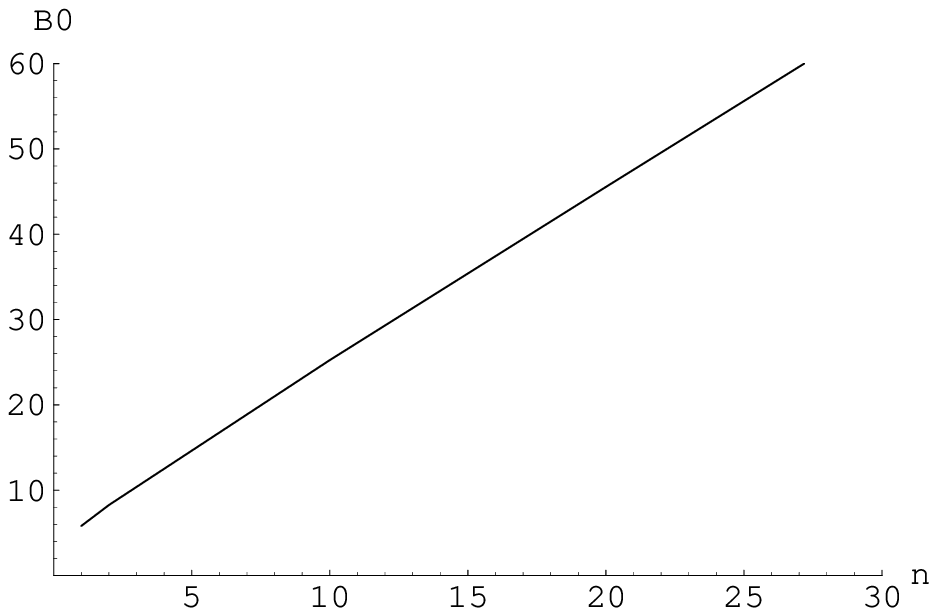}
\end{tabular}
\caption{Irregular example, two-side testing of $M_1: \theta=0$. Left: the DB prior $\pi^{M}$; Right:
$B_{12}^0(n)$ as a function of $n$.} \label{DBIrr}
\end{figure}

$$
B_{12}^0(n)=\lim_{T\rightarrow \theta_0^+}\, B_{12}^{\pi^D}.
$$
$B_{12}^0(n)$ is an upper bound of $B_{12}$ when the evidence in favor of $M_1$ is largest. It can be
seen in Figure~\ref{DBIrr} (right) that  $B_{12}^0(n)$ is nearly linear. Of course
$B_{12}^0(n)\rightarrow\infty$ when $n\rightarrow\infty$.

As mentioned before, there does not seem to be any other proposals in the literature for the two-side
testing problem. However, Berger and Pericchi (2001), do consider the `one side testing' version,
namely testing $M_1:\theta=\theta_0$ vs $M_2:\theta>\theta_0$; they conjecture that the arithmetic
intrinsic prior for this problem is the proper density
$$
\pi_2^A(\theta)=\big(-e^{\theta-\theta_0}\,\log(1-e^{\theta_0-\theta})-1\big),\hspace{.5cm}\theta>\theta_0,
$$
which is a decreasing and unbounded function of $\theta$. Also, since We next compare the (minimum) DB prior for
this problem with Berger and Pericchi proposal.

Although our original formulation appears to be in terms of two side testing (see \eqref{ht}) in
reality it suffices to define $\Theta$ appropriately to cover other testing situations. For
instance, in our one-side testing, we take $\Theta = [\theta_0, \infty)$. The (minimum) DB prior
is
$$
\pi^{M}(\theta)=\big(1+2(\theta-\theta_0)\big)^{-3/2},\hspace{.5cm}\theta>\theta_0.
$$

It can  be checked, that $\pi^A$ meets condition (\ref{irrcs}) for $k=1$ and  hence $\pi^A$ produces
evidence consistent Bayes factors $\forall n\ge 1$. The priors  $\pi^A$ and $\pi^{M}$ are displayed
in Figure~\ref{Priors1S}. We find that also in this example $\pi^{M}$ has thicker tails.

In this one side testing  scenario (in sharp contrast to the
behavior in the two-side testing) the Bayes factor in favor of $M_1$
for every $n > 0$ does grow to $\infty$ as the evidence in favor of
$M_1$ grows. Indeed, the Bayes factor $B_{12}$ is
$$
\Big(\int_{\theta_0}^T\, \exp\{n(\theta-\theta_0)\}\, \pi(\theta)d\theta\Big)^{-1},
$$
so that, $B_{12}\rightarrow\infty $ when $T\rightarrow\theta_0^+ \ ,
\forall n > 0$, no matter what prior is used. Note that here
$\theta_0$ is in the boundary of the parameter space.

\begin{figure}[!t]
\centering
\includegraphics[width=200pt,height=120pt]{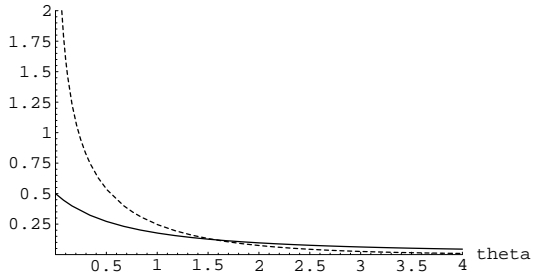}
\caption{Irregular, one side testing problem: $\pi^D$ (solid) and $\pi^A$ (dots) for the case
$\theta_0=0$.} \label{Priors1S}
\end{figure}

In Table~\ref{IrrTable}, we produce the Bayes factors computed with
$\pi^A$ and $\pi^{M}$ when $\theta_0=0$ for various values of
$T=\min\{y_1,\ldots,y_n\}$, and for $n=10$ and $n=20$. For small
values of $T$ ($T<0.20$), when evidence supports $M_1$, $B_{12}^{M}$
is considerably larger than $B_{12}^{A}$, thus giving more support
to $M_1$. For larger values of $T$ (that is, when data contradict
$M_1$) both priors result in very similar Bayes factors.

\begin{table}[t!]
\begin{center}{\small
\begin{tabular}{c|cccccc}
& \multicolumn{6}{c}{$T$}\\
\hline
            & 0.02 & 0.05 & 0.10 & 0.20 & 0.50 & 1.00\\
& \multicolumn{6}{c}{$n=10$}\\
$B_{12}^M$ & 46.56 & 16.66 & 6.83 & 2.19 & 0.16 & 0.002\\
$B_{12}^A$     & 11.54 & 5.16 & 2.57 & 1.02 & 0.10 & 0.001\\
\hline
& \multicolumn{6}{c}{$n=20$}\\
$B_{12}^M$ & 41.96 & 12.65 & 3.75 & 0.55 & 0.002 & $2\cdot 10^{-7}$\\
$B_{12}^A$     & 10.52 & 4.04 & 1.50 & 0.28 & 0.002 & $2\cdot 10^{-7}$\\
\hline
\end{tabular}\caption{Irregular models, one side testing.  Values of $B_{12}$
for different values of $T$, $n$ and for the two priors $\pi^A,
\pi^M$, when testing $\theta_0=0$.}\label{IrrTable}}
\end{center}
\end{table}

\subsection{Mixture models (Example 5)}

Mixture models are among the most challenging scenarios for
objective Bayesian methodology. These models have {\it improper
likelihoods}, i.e., likelihoods for which no improper prior yields a
finite marginal density (integrated likelihood). Recently, P\'{e}rez
and Berger (2001), have used \emph{expected posterior priors} (see
P\'{e}rez and Berger, 2002) to derive objective estimation priors,
but basically no general method seems to exist for deriving
objective priors for testing with these models.

However, the divergence measures are well defined (although the integrals are now more involved)
providing a reasonable DB prior to be used in model selection. We consider a simple illustration.
Assume
$$
f(y\mid\mu,p)=p\, N(y\mid 0,1)+(1-p)\, N(y\mid \mu,1),
$$
and the testing of $H_1:\mu=0$, vs. $H_2:\mu\ne 0$, where $p<1$ is
known (if $p=1$, both hypotheses define the same model). As Berger
and Pericchi (2001) point out, there is no minimal training sample
for this problem and hence the intrinsic Bayes factor cannot be
defined. The fractional Bayes factor does not exist either. The only
prior we know for this problem is the recommendation in Berger and
Pericchi (2001) of using $\pi^{BP}(\mu)=Ca(\mu|0,1)$.

Although there is no formal $\pi^N(\mu)$ here, $\pi^N(\mu)=1$ is usually assumed  (see for
instance P\'{e}rez and Berger, 2002). It can be shown that $\uq^M=\infty$, and hence, $\pi^{M}$ does
not exist. Let
\begin{equation}\label{Greal}
G(p,\mu,\mu^*)=\int_{-\infty}^\infty\, \log\Big[1+\frac{1-p}{p}\, e^{y\mu-\mu^2/2}\Big]
N(y\mid\mu^*,1)\, dy.
\end{equation}
Then
$$
D^S[\mu,\mu_0]=n(1-p)\big(G(p,\mu,\mu)-\,G(p,\mu,0)\big).
$$
It can be shown that  $\uq^S<\infty$, and hence that the sum DB prior $\pi^{S}$ exists. The
normalizing constant, however, can not be derived in closed form. Numerical procedures could be used
to exactly derive the sum-DB prior. We use instead a Laplace approximation (see Tanner 1996) to
\eqref{Greal} to get and approximate DB prior. Specifically
%
%
\begin{equation}\label{Gaprox}
G(p,\mu,\mu^*)\approx\log\Big[1+\frac{1-p}{p}\, e^{\mu^*\mu-\mu^2/2}\Big]= G^L(p,\mu,\mu^*).
\end{equation}
Figure~\ref{Grealaprox} shows $G(p,\mu,\mu^*) - G(p,\mu,0)$ and its approximation
$G^L(p,\mu,\mu^*) - G^L(p,\mu,0)$ for $p=.5$ and $p=.75$. The approximation is very good as long
as $p$ is not too extreme.

\begin{figure}[!t]
\centering
\begin{tabular}{cc}
\includegraphics[width=190pt,height=120pt]{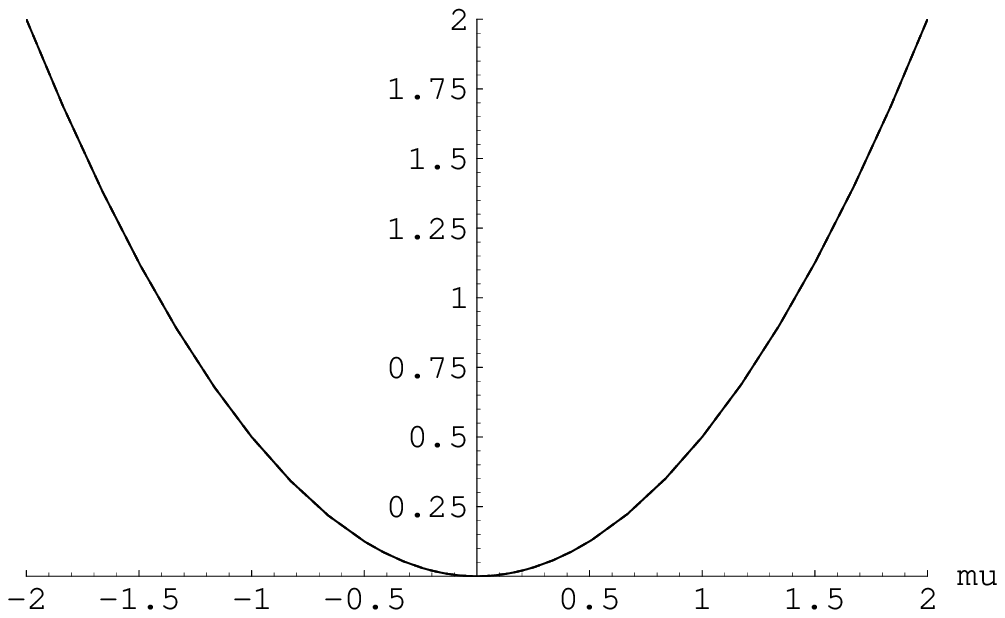} &
\includegraphics[width=190pt,height=120pt]{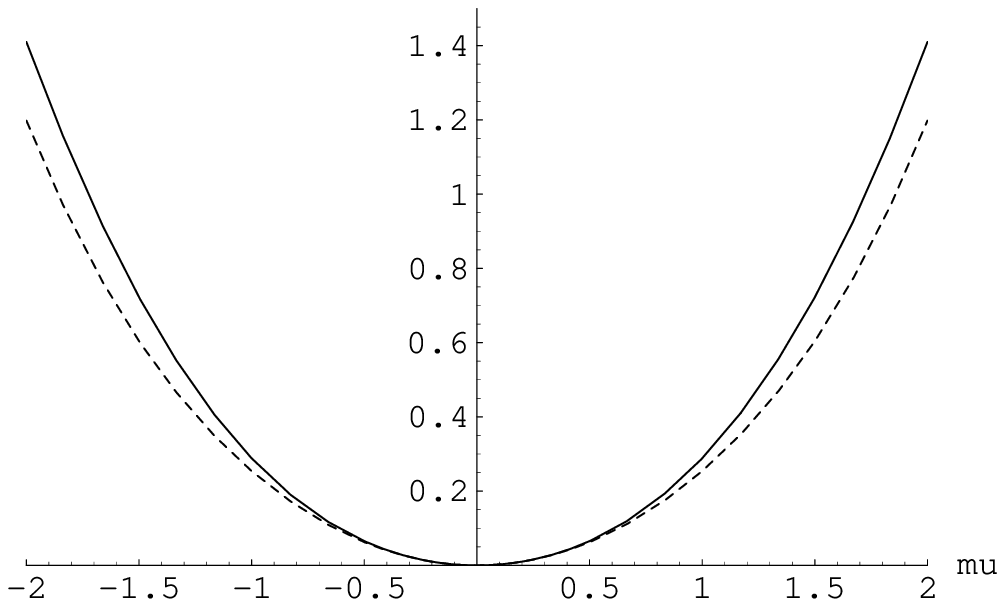}
\end{tabular}
\caption{$G(p,\mu,\mu)-G(p,\mu,0)$ (solid) and its Laplace approximation
$G^L(p,\mu,\mu)-G^L(p,\mu,0)$ (dots). Left: $p=0.50$. Right: $p=0.75$.}\label{Grealaprox}
\end{figure}

We can now use this approximation to derive the DB prior. Note that the natural effective sample size here is
$n^*=n(1-p)$, so that the unitary sum-symmetrized divergence is

$$
\bar{D}^S[\mu,\mu_0]=\frac{D^S(\mu,\mu_0)}{n(1-p)}\approx \log\frac{1+\frac{1-p}{p} \
e^{\mu^2/2}}{1+\frac{1-p}{p} \ e^{-\mu^2/2}}=\bar{D}^{S L}[\mu,\mu_0].
$$

This approximation is specially appealing because it also keeps essential properties of the
divergence measures. In particular, $\bar{D}^{S L}(\mu,\mu_0)\ge \bar{D}^{S L}(\mu_0,\mu_0)=0$, so
that the approximate DB prior
$$
\pi^{SL}(\mu)\propto \big(1+\bar{D}^{S L}(\mu,\mu_0)\big)^{-q_*^s},
$$
has a mode at zero. Since  $\uq^s=1/2$, we finally get
$$
\pi^{SL}(\mu)\propto \big(1+\bar{D}^{L S}(\mu,\mu_0)\big)^{-1}.
$$

\begin{figure}[!t]
\centering
\begin{tabular}{ccc}
\includegraphics[width=130pt,height=120pt]{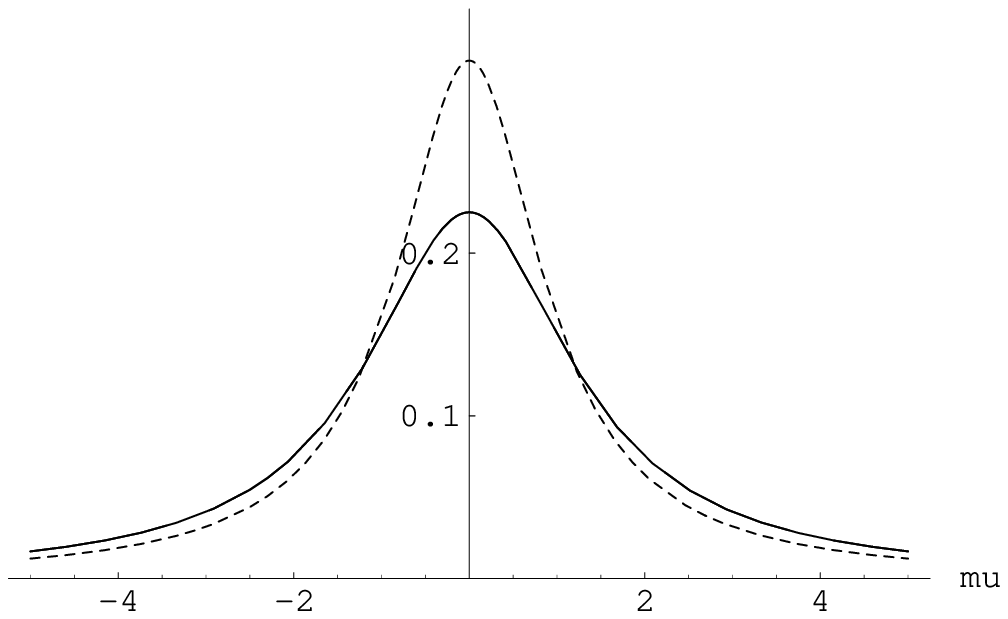} &
\includegraphics[width=130pt,height=120pt]{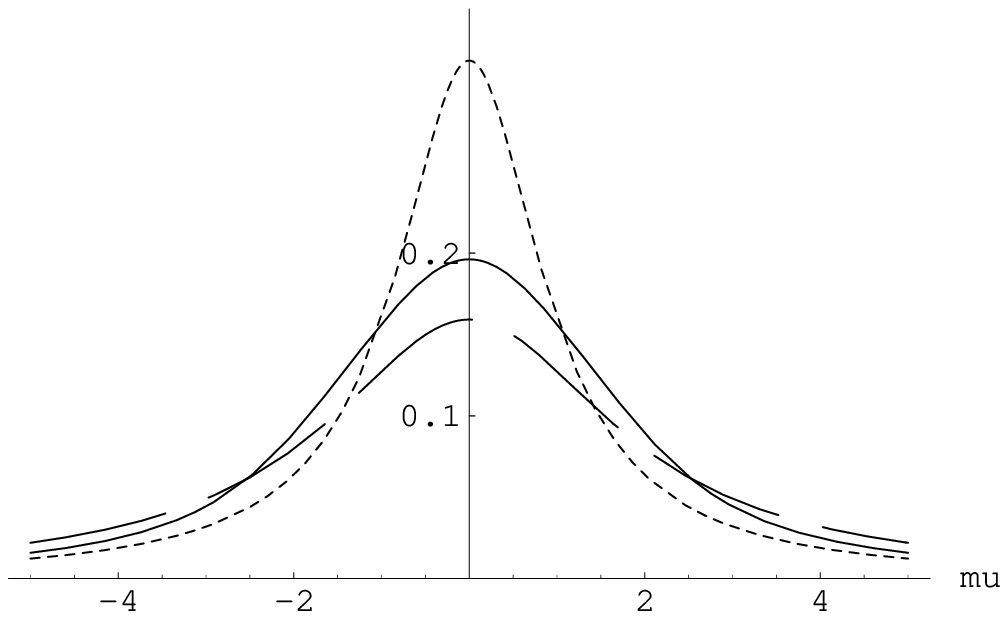} &
\includegraphics[width=130pt,height=120pt]{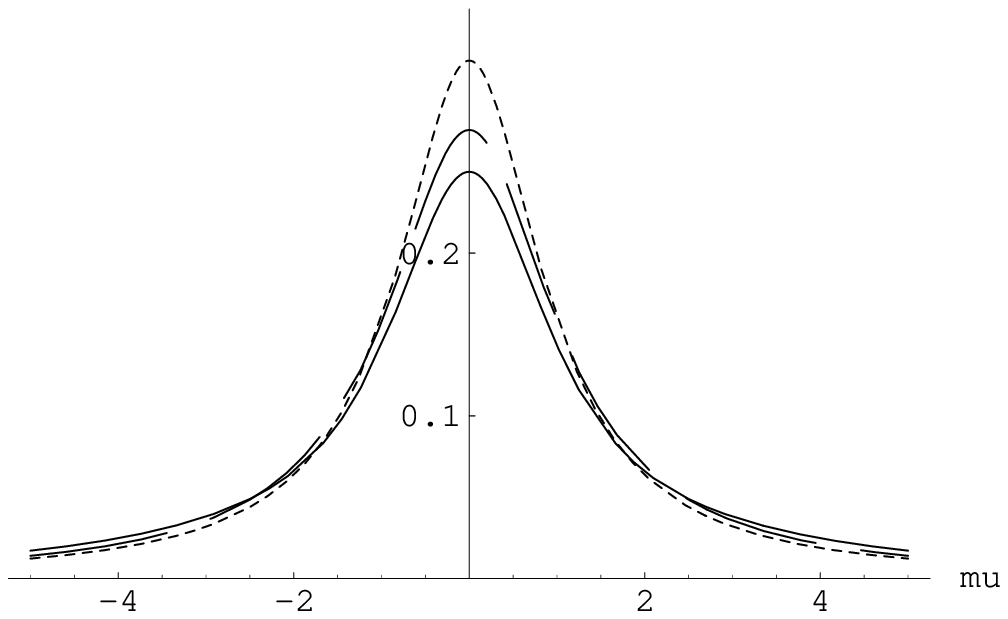}
\end{tabular}
\caption{$\pi^{SL}$ (solid line), $Ca(0,1/1-p)$ (dashed line) and $\pi^{BP}(\mu)=Ca(\mu|0,1)$
(dots) for $p=0.50$ (left), $p=0.75$ (middle) and $p=0.25$ (right).}\label{piCas}
\end{figure}

Interestingly, the prior $\pi^{SL}$ is close to a Cauchy density, which was Berger and Pericchi
proposal, although the scale differs. Indeed a Taylor expansion of order 3, around $\mu=0$ gives
\begin{equation}\label{firstap}
\bar{D}^{SL}(\mu,\mu_0) \approx (1-p)\mu^2,
\end{equation}
so that, unless $p$ is very close to 1, $\pi^{SL}$ behaves around
$0$ as a $Ca(\mu \mid 0,1/(1-p))$; the approximation is excellent
when $p$ is close to 0.5. In the tails, on the other hand, we have
that, as $|\mu|\rightarrow\infty$
\begin{equation}\label{secondap}
\bar{D}^{SL}(\mu,\mu_0) \approx \frac{\mu^2}{2},
\end{equation}
independently of $p$. Hence, the tails of $\pi^{SL}$ are close to those of a $Ca(\mu\mid 0,2)$
density. Note that both approximations (\ref{firstap}) and (\ref{secondap}) coincide for $p=0.5$.

The scale of the $Ca(\mu \mid 0,1/(1-p))$ makes intuitive sense.
Indeed, the larger $p$, the less observations providing information
about $\mu$ we get, and the DB prior adjust to a less informative
likelihood by inflating its scale.  Figure~\ref{piCas} displays
$\pi^{SL}$, its $Ca(\mu \mid 0,1/(1-p))$ approximation, and the
proposal of Berger and Pericchi (2001) { for different values of
$p$. Notice that, for values of $p$ close to 0, $\pi^{SL}$ (and its
approximation $Ca(0,1/1-p)$) approximately behaves as a $Ca(0,1)$,
the Berger and Pericchi proposal (see Figure~\ref{piCas}, right).
This has an interesting interpretation since, as $p\rightarrow 0$
the testing problem in this example essentially coincides with that
of testing $H_1:\mu=0$ vs. $H_2:\mu\ne0$, when $\mu$ is the mean of
a normal density, for which the $Ca(\mu\mid 0,1)$ is perhaps the
most popular prior to be used as prior distribution for $\mu$ under
$H_2$. }

In this example, the DB prior (as well as Berger and Pericchi proposal) again produces evidence
consistent Bayes factors for all $n$. Indeed, it can be shown that if one of the $y_i's$ tends to
$\infty$ or $-\infty$, then the corresponding Bayes factor tends to 0 no matter what prior is used.
On the other hand, as the evidence for $H_1$ increases, we get a finite upper bound on $B_{12}$ for
every fixed sample size $n$:
$$
B_{12}^0(n,p,\pi)=\lim_{y_i\rightarrow 0,\forall i}\, B_{12}.
$$
In Figure~\ref{ILM0} we show $B_{12}^0$ for $\pi=\pi^{SL}$ and $\pi=Ca(\mu \mid 0,1)$ as a
function of $n$ for $p=0.5$.  As in the previous examples, it is an immediate consequence that
$B_{12}^0(n,p,\pi)\rightarrow\infty$ as $n\rightarrow\infty$ for both priors, but the support for
$H_1$ is larger when $\pi^{SL}$ is used for every  $n$.

\begin{figure}[!t]
\centering
\includegraphics[width=190pt,height=120pt]{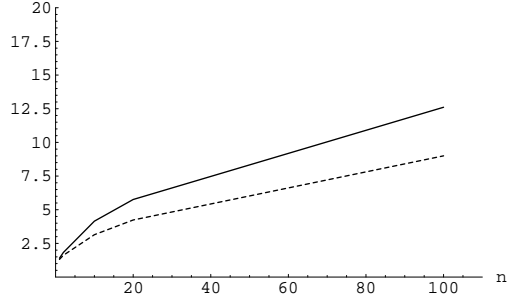}
\caption{$B_{12}^0$ for $\pi^{SL}$ (solid line) and $\pi^{BP}$ (dots) as a function of $n$, for
$p=0.5$.} \label{ILM0}
\end{figure}

In Table~\ref{ILTable} we show the Bayes factors $B_{12}^{SL}$,  $B_{12}^{ap}$ and $B_{12}^{BP}$
computed respectively with the priors $\pi^{SL}$, its $Ca(\mu\mid0,1/(1-p))$ approximation and the
$Ca(\mu\mid 0,1)$ proposed by Berger and Pericchi. Since reduction by sufficient statistic is not
possible, the Bayes factors are computed for simulated samples of size $n=20$, with mean
$\mu\in\{0,0.5,1\}$, and $p\in\{0.25,.5, 0.75\}$. $B_{12}^{SL}$ and its approximation
$B_{12}^{ap}$ are very close, demonstrating that the approximation is very good for the considered
range of $p$. $B_{12}^{SL}$ and $B_{12}^{BP}$ are also very similar.

\begin{table}[t!]
\begin{center}{\small
\begin{tabular}{c|ccc|ccc|ccc}
&\multicolumn{3}{c}{$p=0.25$} &\multicolumn{3}{c}{$p=0.5$} &\multicolumn{3}{c}{$p=0.75$} \\
$\mu$ & $B_{12}^{SL}$ & $B_{12}^{ap}$ & $B_{12}^{BP}$ & $B_{12}^{SL}$ & $B_{12}^{ap}$ &
$B_{12}^{BP}$
& $B_{12}^{SL}$ & $B_{12}^{ap}$ & $B_{12}^{BP}$\\
\hline
0 & 5.49 & 4.97 & 4.39 & 2.56 & 2.56 & 2.01 & 2.37 & 2.90 & 1.87\\
0.5 & 1.82 & 1.65 & 1.49 & 0.36 & 0.36 & 0.33 & 1.69 & 2.06 & 1.42\\
1 & 0.07 & 0.06 & 0.06 & 0.04 & 0.04 & 0.04 & 0.01 & 0.01 & 0.01\\
\hline
\end{tabular}\caption{Bayes factors $B_{12}$ for simulated samples of size $n=20$ from the mixture
model with various values of $p$ and $\mu$ and the priors
$\pi^{SL}$ , its approximation $Ca(\mu \mid 0,1/(1-p))$ and
$\pi^{BP}(\mu)= Ca(\mu \mid 0,1)$.}\label{ILTable}}
\end{center}
\end{table}

\section{Nuisance parameters}\label{nuisance}
In this section we deal with more realistic problems in which the distribution of the data is not
fully specified under the null (simplest model), but depends on some nuisance parameter. Assume
that $y_i, \ i=1, \ldots, n$ are independent (not necessarily i.i.d.) and that $\n y = (y_1,
\ldots, y_n) \sim \{f(\n y \mid \n\theta, \n\nu), \ \n\theta\in\Theta, \, \n\nu\in\Upsilon\}$. We
want to test $H_1:\n\theta=\n\theta_0$ vs. $H_2:\n\theta\ne\n\theta_0$. Equivalently we want to
solve the model selection problem  \eqref{comp} where it is carefully acknowledged that $\n\nu$
can have a different meanings in each model.

However, from now on we assume, after suitable reparameterization if
needed, that $\n\theta$ and $\n\nu$ are {\it orthogonal} (that is,
that Fisher information matrix is block diagonal). It is then
customary to assume that $\n\nu$ has the same meaning under both
models (see Berger and Pericchi, 1996, for an asymptotic
justification). This will be needed for the divergence measures to
have intuitive meaning, and also to justify assessment of the same
(possibly improper) prior for $\n\nu$ under both models thus
considerably simplifying the assessment task. The suitability of
orthogonal parameters in the presence of model uncertainty was first
exploited by Jeffreys (1961) and it has been successfully used by
many others (see for example Zellner and Siow, 1980, 1984, and
Clyde, DeSimone and Parmigiani, 1996). For univariate $\theta$, Cox
and Reid (1987) explicitly provide an orthogonal reparameterization.

Accordingly, we assume that the hypothesis testing problem above is
equivalent to that of choosing between the competing models:
 \beq{compNuiO}
M_1:f_1(\n{y}\mid\n\nu)=f(\n{y}\mid\n\theta_0,\n\nu)\hspace{.25cm}\mbox{vs.} \hspace{.25cm}
M_2:f_2(\n{y}\mid\n\theta,\n\nu)=f(\n{y}\mid\n\theta,\n\nu),
 \eeq
  where $\n\theta_0\in\Theta$ is a specified value, and $\n\nu$ (the {\it old parameter} in Jeffrey's
  terminology) is assumed to be common to both models, which only differ by the different value of
  the {\it new parameter} $\n\theta$ under $M_2$.

\subsection{Divergence Measures}


The basic measure of discrepancy between $\n\theta$ and $\n\theta_0$ is again Kullback-Leibler
directed divergence \eqref{eq:KL} where $\n\nu$ is taken to be the same in both models:
\ba KL[(\n\theta_0,\n\nu):(\n\theta,\n\nu)]=\int_{{\cal Y}}\, (\log f(\n y\mid \n\theta, \n\nu)-\log
f(\n y\mid\n\theta_0,\n\nu))\, f(\n y\mid\n\theta, \n\nu)\,d\n y. \ea
Note that using the same $\n\nu$ only makes intuitive sense if $\n\nu$ has the same meaning under
both models, and hence can be considered common. Actually, P\'{e}rez (2005) using geometrical
arguments, shows that under orthogonality $ KL[(\n\theta_0,\n\nu):(\n\theta,\n\nu)]$ can be
interpreted as a measure of divergence between $f_1$ and $f_2$ due solely to the parameter of
interest $\n\theta$. This interpretation does not hold for other divergence measures,  as the
intrinsic loss divergence defined in Bernardo and Rueda (2002).

Similarly to Section~\ref{DBpriors} we  symmetrize Kullback-Leibler
directed divergence by adding or taking the minimum of them,
resulting in the sum-divergence and min-divergence measures between
$\n\theta$ and $\n\theta_0$ for a given $\n\nu$
 \beq{DSnu}
  D^S[(\n\theta,\n\theta_0)\mid \n\nu]=
KL[(\n\theta,\n\nu):(\n\theta_0,\n\nu)]+ KL[(\n\theta_0,\n\nu):(\n\theta,\n\nu)],
  \eeq
and
  \beq{DMnu}
D^M[(\n\theta,\n\theta_0)\mid \n\nu]= 2\times\;\min\{KL[(\n\theta,\n\nu):(\n\theta_0,\n\nu)],
KL[(\n\theta_0,\n\nu):(\n\theta,\n\nu)]\}.
  \eeq
  $D^M$ is used by P\'{e}rez (2005) {to define what he calls} the ``orthogonal intrinsic loss''.

In what follows, many of the definitions and properties apply to
both $D^S$ and $D^M$, in which case we again generically  use $D$ to
denote any of them. Their basic properties were discussed in
Section~\ref{DBpriors}. As before, the building block of the DB
prior is the {\it unitary measure of divergence} $\bar D = D/n^*$,
where  $n^*$ is the equivalent sample size  for $\n\theta$.

\subsection{DB priors in the presence of nuisance parameters}
For testing $H_1:\n\theta=\n\theta_0$ vs. $H_2:\n\theta\ne\n\theta_0$, or equivalently choosing
between models $M_1$ and $M_2$ in  \eqref{compNuiO}, we need priors $\pi_1(\n\nu)$ under $M_1$ and
$\pi_2(\n\nu, \n\theta)$ under $M_2$.

In the spirit of Jeffreys (and many others after him) we  take
(under each of the models) the {\it same} objective (possibly
improper) prior for the common parameter $\n\nu$ and a proper prior
for the conditional distribution of the new parameter $\n\theta \mid
\n\nu$ under $M_2$, which will be derived similarly to the DB priors
in Section~\ref{GenPar}. Note that since $\n\nu$ occurs in the two
models, if we take the same $\pi^N(\n\nu)$ in both, then the
(common) arbitrary constants cancel when computing the Bayes factor;
however $\n\theta$ which only occurs in $M_2$ has to have a proper
prior. A common prior for the old parameter only makes sense when
$\n\nu$ has the same meaning in both models (another reason to take
$\n\theta$ and $\n\nu$ orthogonal). Moreover, it is well known that
under orthogonality, the specific \emph{common} prior for $\n\nu$
has little impact on the resulting Bayes factor (see Jeffreys 1961;
Kass and Vaidyanathan 1992), thus supporting use of objective priors
for common parameters.


Let $\pi^N(\n\nu)$ be an objective (usually either Jeffreys or reference) prior for model $f_1$  and
$\pi^N(\n\theta,\n\nu)$ the corresponding one for model $f_2$ ($\n\theta$ is of interest if the
reference prior is used). We {\it define} $\pi^N(\n\theta\mid\n\nu)$ such that
$$
\pi^N(\n\theta,\n\nu)=\pi^N(\n\theta\mid\n\nu)\,\pi^N(\n\nu).
$$
To define the DB priors, let $D$ any of \eqref{DSnu} or \eqref{DMnu} (other appropriate divergence
measures could also be explored). Then we define:

\begin{deft}{\bf (DB priors) }\label{piDWn} Let 
$c(q,\n\nu)=\int\, \big(1 + \bar{D}[(\n\theta,\n\theta_0)\mid \n\nu] \big)
^{-q}\,\pi^N(\n\theta\mid\n\nu)\, d\n\theta$, and
$$
\uq =\inf\{\, q\ge 0:\hspace{.1cm} c (q,\n\nu)<\infty\},\hspace{.1cm} \mbox{a.e.}\,
\n\nu\in\Upsilon,\hspace{.6cm} q_* =\uq + 1/2
$$
If $\uq <\infty$, the D-divergence based prior under $M_1$ is $\pi^D_1(\n\nu)=\pi^N(\n\nu),$ and
under $M_2$  is $\pi_2^{D}(\n\theta,\n\nu) = \pi^{D}(\n\theta \mid \n\nu) \ \pi^N(\n\nu)$, where the
(proper) $\pi^{D}(\n\theta \mid \n\nu)$ is
$$
\pi^{D}(\n\theta \mid \n\nu)=c(q_*,\n\nu)^{-1}\,\big(1 + \bar{D}[(\n\theta,\n\theta_0)\mid \n\nu]
\big) ^{-q_*}\,\pi^N(\n\theta\mid\n\nu) \ .
$$
\end{deft}
In this defintion we are implicitly using the reccomended
non-increasing function $h_q(t)=(1+t)^{-q}$, but again other
non-increasing functions on $t\in[0,\infty)$ could be explored.

\begin{deft}\label{piDsDmWn} {\rm \bf (Sum and Minimum DB priors)} The sum DB prior $\pi^S$ and the minimum DB
prior $\pi^M$ are the DB priors given in definition \ref{piDWn} with $D$ being respectively $D^S$
(see \eqref{DSnu}) and $D^M$ (see \eqref{DMnu}). When needed, we refer to their corresponding c's and
q's as $c_S, \uq^S, q_*^S$, and $c_M, \uq^M, q_*^M$, respectively.
\end{deft}



We next investigate whether the DB priors are invariant under reparameterizations. Suppose that
$\n\xi=\n\xi(\n\theta)$  and $\n\eta=\n\eta(\n\nu)$ are, respectively one-to-one monotone mappings
$\n\xi:\Theta\rightarrow\Theta_\xi$, $\n\eta:\Upsilon\rightarrow\Upsilon_\eta$. Clearly, the
reparameterization $(\n\xi,\n\eta)$ preserves orthogonality.

The original problem (\ref{compNuiO}) in this parameterization
becomes: \beq{compNuiOR}
M_1^*:f_1^*(\n{y}\mid\n\eta)=f^*(\n{y}\mid\n\xi_0,\n\eta)\hspace{.25cm}\mbox{vs.}
\hspace{.25cm}
M_2^*:f_2^*(\n{y}\mid\n\xi,\n\eta)=f^*(\n{y}\mid\n\xi,\n\eta), \eeq
where $f^*(\n y\mid\n\xi(\n\theta),\n\eta(\n\nu))=f(\n
y\mid\n\theta,\n\nu)$ and $\n\xi_0=\n\xi(\n\theta_0)$. We next show
that if $\pi^N(\n\nu)$ and $\pi^N(\n\theta,\n\nu)$ are invariant
under these reparameterizations, so are the DB priors. (See Datta
and Ghosh, 1995 for a detailed analysis about the invariance of
several non informative priors in the presence of nuisance
parameters.)

\begin{thm}\label{invNui} (Invariance under one-to-one transformations.)
Let $\pi^D_\nu(\n\nu)$ and $\pi^D_\eta(\n\eta)$ be either the sum or the minimum DB priors under
$M_1$ for the original (\ref{compNuiO}), and reparameterized (\ref{compNuiOR}) problems,
respectively, and similar notation for $\pi_{\theta,\nu}^{D}(\n\theta,\n\nu)$ and
$\pi_{\xi,\eta}^{D}(\n\xi,\n\eta)$, under $M_2$. If $\pi_\nu^N(\n\nu)=\kappa\,
\pi_\eta^N(\n\eta(\n\nu))
 \, |{\cal J}_\eta(\n\nu)|$, where $\kappa$ is a constant, and $\pi_{\theta,\nu}^N(\n\theta,\n\nu)\propto
\pi_{\xi,\eta}^N(\n\xi(\n\theta),\n\eta(\n\nu)) \, |{\cal J}_{\xi,\eta}(\n\theta,\n\nu)|$, then
$$
\pi^D_\nu(\n\nu)=\kappa\, \pi_\eta^D(\n\eta(\n\nu))\, |{\cal J}_\eta(\n\nu)|,\hspace{.5cm}
\pi_{\theta,\nu}^{D}(\n\theta,\n\nu)=\kappa\, \pi_{\xi,\eta}^{D}(\n\xi(\n\theta),\n\eta(\n\nu)) \,
|{\cal J}_{\xi,\eta}(\n\theta,\n\nu)|.
$$
\end{thm}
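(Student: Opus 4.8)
The plan is to mirror the argument of Proposition~\ref{inv}, but organized around the \emph{conditional} DB prior $\pi^{D}(\n\theta\mid\n\nu)$ and then reattaching the common factor for $\n\nu$. The statement under $M_1$ is immediate: by Definition~\ref{piDWn} the DB prior under $M_1$ is just the estimation prior, $\pi_\nu^{D}(\n\nu)=\pi_\nu^{N}(\n\nu)$ and $\pi_\eta^{D}(\n\eta)=\pi_\eta^{N}(\n\eta)$, so the hypothesis $\pi_\nu^N(\n\nu)=\kappa\,\pi_\eta^N(\n\eta(\n\nu))|{\cal J}_\eta(\n\nu)|$ \emph{is} the desired conclusion. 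For the $M_2$ identity, since $\pi^D_{\theta,\nu}(\n\theta,\n\nu)=\pi^{D}(\n\theta\mid\n\nu)\,\pi_\nu^N(\n\nu)$ (and analogously in the $(\n\xi,\n\eta)$ parameterization), and since $(\n\theta,\n\nu)\mapsto(\n\xi(\n\theta),\n\eta(\n\nu))$ acts blockwise so that $|{\cal J}_{\xi,\eta}(\n\theta,\n\nu)|=|{\cal J}_\xi(\n\theta)|\,|{\cal J}_\eta(\n\nu)|$, it suffices to establish the conditional invariance
\[
\pi^{D}(\n\theta\mid\n\nu)=\pi^{D}_\xi\big(\n\xi(\n\theta)\mid\n\eta(\n\nu)\big)\,|{\cal J}_\xi(\n\theta)|.
\]

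I would assemble this from three ingredients. First, $\bar{D}$ is invariant under reparameterization: $KL$ depends on the competing models only through the densities $f(\n y\mid\cdot)$, not on the labeling of the parameters, and $n^*$ is unchanged by a relabeling, so $\bar{D}[(\n\theta,\n\theta_0)\mid\n\nu]=\bar{D}^{*}[(\n\xi(\n\theta),\n\xi_0)\mid\n\eta(\n\nu)]$ with $\n\xi_0=\n\xi(\n\theta_0)$. Second, dividing the hypothesis on the joint prior by the hypothesis on the marginal prior, using the blockwise Jacobian and $\pi^N(\n\theta,\n\nu)=\pi^N(\n\theta\mid\n\nu)\pi^N(\n\nu)$, and cancelling the common factor $\pi_\eta^N(\n\eta(\n\nu))|{\cal J}_\eta(\n\nu)|$, yields the conditional transformation of the base prior,
\[
\pi^N\big(\n\xi(\n\theta)\mid\n\eta(\n\nu)\big)\,|{\cal J}_\xi(\n\theta)|=\frac{\kappa}{\lambda}\,\pi^N(\n\theta\mid\n\nu),
\]
where $\lambda$ is the proportionality constant implicit in the joint hypothesis. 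Third, performing the change of variables $\n\xi=\n\xi(\n\theta)$ in the defining integral $c^{*}(q,\n\eta(\n\nu))=\int(1+\bar{D}^{*})^{-q}\,\pi^N(\n\xi\mid\n\eta(\n\nu))\,d\n\xi$ and feeding in the two previous displays gives $c^{*}(q,\n\eta(\n\nu))=(\kappa/\lambda)\,c(q,\n\nu)$.

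This last relation is the crux, since it shows the two normalizing integrals are finite for exactly the same values of $q$; hence $\uq$, and therefore $q_*=\uq+1/2$, coincide in the two parameterizations (the a.e.\ qualifier on $\uq$ is preserved because the diffeomorphism $\n\eta$ carries null sets to null sets). With $q_*=q_*^{*}$ established, I substitute the divergence invariance, the conditional transformation of $\pi^N$, and $c^{*}(q_*,\n\eta)^{-1}=(\lambda/\kappa)\,c(q_*,\n\nu)^{-1}$ directly into $\pi^{D}_\xi(\n\xi(\n\theta)\mid\n\eta(\n\nu))|{\cal J}_\xi(\n\theta)|$; the factors $\kappa/\lambda$ cancel and the displayed conditional identity falls out. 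Multiplying by the $M_1$ relation for $\pi^N(\n\nu)$ and regrouping the Jacobians then produces the $M_2$ identity with overall factor $\kappa$.

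I expect the only genuine obstacle to be the careful bookkeeping of the two constants $\kappa$ and $\lambda$ through the marginal, conditional, and normalizing-constant relations; the conceptual content---invariance of $\bar{D}$ and the blockwise Jacobian guaranteed by orthogonality of $(\n\theta,\n\nu)$---is inherited from the no-nuisance case of Proposition~\ref{inv}, and the matching of the critical exponents $\uq$ is an immediate consequence of the proportionality $c^{*}\propto c$.
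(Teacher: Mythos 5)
Your proposal is correct and follows essentially the same route as the paper's own proof: invariance of $\bar{D}$ under the reparameterization, division of the joint-prior hypothesis by the marginal one to get the conditional transformation of $\pi^N(\n\theta\mid\n\nu)$ (your $\lambda$ is the paper's $\kappa_2$), the resulting proportionality $c^{*}(q,\n\eta(\n\nu))\propto c(q,\n\nu)$, and substitution back into Definition~\ref{piDWn}. The only difference is cosmetic: you spell out that the proportionality of the normalizing constants holds for all $q$, so that $\uq$ and $q_*$ coincide across parameterizations (with the a.e.\ qualifier preserved), a point the paper uses implicitly without comment.
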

\begin{proof} See Appendix.
\end{proof}

As a consequence,  DB Bayes factors are not affected by
reparameterizations of the type considered. These are the most
natural and interesting reparameterizations of the problem (and
indeed other reparameterizations seem questionable).
Also, the DB priors are compatible with reduction by sufficiency in
the same spirit as in Proposition~\ref{CSS}.



\subsection{Examples}
We next demonstrate the behavior of  DB priors and corresponding Bayes factors in a couple of
examples. The first is testing the mean of a gamma model, a difficult problem in general. The second
discusses linear models.

\subsubsection{Gamma model (Example 6)}\label{gamSubSub}
Let $\n y=(y_1,\ldots,y_n)$ be an iid sample from a Gamma model with mean $\mu$, and shape parameter
$\alpha$, that is, from
$$
f(y\mid\alpha,\mu)=\big(\frac{\alpha}{\mu}\big)^\alpha \Gamma(\alpha)^{-1}\, y^{\alpha-1}\,
e^{-y\alpha/\mu}.
$$
It is desired to test $H_1:\mu=\mu_0$ vs. $H_2:\mu\ne\mu_0$. It is easy to show that $\mu$ is
orthogonal to $\alpha$.

The objective (reference) priors are $\pi^N(\alpha)=(\psi^{(1)}(\alpha)-1/\alpha)^{1/2}$ and
$\pi^N(\mu,\alpha)=\mu^{-1}(\psi^{(1)}(\alpha)-1/\alpha)^{1/2}$, where $\psi^{(1)}$ represents the
digamma function.  Hence $\pi_2^N(\mu\mid\alpha)=\mu^{-1}$.

The DB priors are $\pi^D(\alpha)=\pi^N(\alpha)$, under both hypotheses and for $D$ either the sum
or min divergence. Under $H_2$, the conditional sum-DB prior for $\mu$ is
$$
\pi^{S}(\mu \mid \alpha)=c_s^{-1}(\alpha)\big[1+\alpha \
\frac{(\mu-\mu_0)^2}{\mu\mu_0}\big]^{-1/2}\, \frac{1}{\mu}
$$
where $c_s(\alpha)$ is the proportionality constant
$$
c_s(\alpha)=\int_0^\infty\, \big(1+\alpha\frac{(t-1)^2}{t}\big)^{-1/2}\,\frac{1}{t}\, dt.
$$
The conditional min-DB prior is
 \ba \pi^{M}(\mu \mid \alpha)=c_m(\alpha)^{-1}\,
  \big(1+\bar{D}^M[(\mu,\mu_0)\mid\alpha]\big)^{-3/2}  \frac{1}{\mu} 
 \ea
where \ba \bar{D}^M[(\mu,\mu_0)\mid\alpha]&=&\Big\{
\begin{array}{ccc}
2\,\alpha(\log\frac{\mu}{\mu_0}-1+\frac{\mu_0}{\mu}) & \mbox{if} & \mu>\mu_0\\
2\,\alpha(\log\frac{\mu_0}{\mu}-1+\frac{\mu}{\mu_0}) & \mbox{if} & \mu\le\mu_0,
\end{array}
\ea
and
$$
c_m(\alpha)=2\,\int_0^\infty\, \big(1+2\alpha(t-1+e^{-t})\big)^{-3/2}\,dt.
$$

In Table~\ref{GamTable} we show the corresponding Bayes factors $B^S_{12}$ and $B^M_{12}$ for
$n=10$; the null value is $\mu_0=10$, and we have considered several combinations of
$(\hat\mu,\hat\sigma)$, the maximum likelihood estimates of the mean and standard deviation. When
$\hat\mu=12$ (casting doubt on the null), both Bayes factors are very similar, and increasing with
$\hat\sigma$, an intuitive behavior. When the data shows the most support for the null, that is,
when $\hat\mu=10$, the Bayes factors differ, with the sum-DB prior giving the most support to the
null.

\begin{table}[t!]
\begin{center}{\footnotesize
\begin{tabular}{c|cc|cc|cc}
& \multicolumn{2}{c}{$\hat\mu=10$} & \multicolumn{2}{c}{$\hat\mu=11$} & \multicolumn{2}{c}{$\hat\mu=12$}\\
& $B_{12}^S$ & $B_{12}^M$ & $B_{12}^S$ & $B_{12}^M$ &  $B_{12}^S$ & $B_{12}^M$ \\
\hline
$\hat\sigma=0.5$ & 12.94 & 2.83 & 0.005 & 0.004 & 1$\cdot 10^{-5}$ & 3$\cdot 10^{-5}$\\
$\hat\sigma=1$   & 11.27 & 2.92 & 0.353 & 0.150 & 0.003 & 0.003\\
$\hat\sigma=2$   & 9.49 & 3.06 & 3.102 & 1.136 & 0.22 & 0.12\\
\hline
\end{tabular}\caption{Values of $B_{12}$ for gamma mean testing with $\mu_0=10$; we use $n=10$, and different values of $(\hat\mu,\hat\sigma)$ .}\label{GamTable}}
\end{center}
\end{table}

In contrast with DB priors, it is not possible to derive relatively
simple expressions for the intrinsic priors. Hence, in this example,
we compare the DB Bayes factors with the intrinsic arithmetic Bayes
factor $IB_{12}^A$ (see Berger and Pericchi 1996). Although
$IB_{12}^A$ does not exactly correspond to a Bayes factor derived
from a specific prior, it does asymptotically correspond to a Bayes
factor derived with the intrinsic arithmetic prior. Since
$IB_{12}^A$ is not defined with reduction by sufficiency, the
comparison are carried out for (specific) simulated samples with the
given parameters. In Table~\ref{Gam2Table} we show the arithmetic
intrinsic and DB Bayes factors for testing $H_1:\mu=10$, with $n=10$
and samples generated from Gamma distributions with
$\mu\in\{10,11,12\}$ and  $\sigma\in\{0.5, 1.0, 2.0\}$. The
resulting MLE´s $(\hat\mu, \hat\sigma)$ in lexicographical order
are: \{(10.02,0.52), (9.98,0.99), (9.98,1.97), (11.01,0.48),
(11.00,0.99), (10.98,1.99), (11.99,0.51), (11.98,0.99),
(12.01,1.99)\}. When $H_2$ is true ($\mu=11$ or $\mu=12$), the three
measures are rather close. Similar values are also obtained when the
`null' model $H_1$ is true and $\sigma=2$. In all these cases, the
three measures provide support to the true model. Nevertheless, when
$H_1$ is true and the variance is small, the DB Bayes factors are
very sensible (with $B_{12}^S$ giving the largest support to the
null) but the $IB_{12}^A$ is not, giving support to $H_2$. This
behavior of $IB_{12}^A$ is likely due to the well known instability
of $IB_{12}^A$ when the sample size is small (worsened in this case
because the variance is small).

\begin{table}[t!]
\begin{center}{\footnotesize
\begin{tabular}{c|ccc|ccc|ccc|}
& \multicolumn{3}{c}{$\mu=10$} & \multicolumn{3}{c}{$\mu=11$} & \multicolumn{3}{c}{$\mu=12$}\\
& $B_{12}^S$ & $B_{12}^M$ & $IB_{12}^A$ & $B_{12}^S$ & $B_{12}^M$ & $IB_{12}^A$ & $B_{12}^S$ & $B_{12}^M$ & $IB_{12}^A$\\
\hline
$\sigma=0.5$ & 13.17 & 2.93 & 0.08 & 0.004 & 0.003 & 0.001 & 1.4$\cdot 10^{-5}$ & 3.7$\cdot 10^{-5}$ & 0.1$\cdot 10^{-5}$\\
$\sigma=1$  & 11.15 & 2.88 & 0.55 & 0.33 & 0.14 & 0.07 & 0.003 & 0.003 & 0.001 \\
$\sigma=2$  & 9.57 & 3.08 & 3.71 & 3.07 & 1.12 & 1.23 & 0.22 & 0.12 & 0.07\\
\hline
\end{tabular}\caption{For Gamma model problem, and test $H_1:\mu=10$ vs. $H_2:\mu\ne 10$. In each cell,
values of $B_{12}$ and arithmetic intrinsic Bayes factor $IB_{12}^A$, associated with a sample of
size $n=10$, from a Gamma model with mean $\mu$ and standard deviation
$\sigma$.}\label{Gam2Table}}
\end{center}
\end{table}

\subsubsection{Variable selection in linear models (Example 7).}\label{subsubLM}
We briefly show next the motivating example for this paper; specifically we show how the DB prior
reproduces Jeffreys-Zellner-Siow prior for variable selection in linear models. More elaborated
examples of testing in linear models can be found in Bayarri and Garc\'{\i}a-Donato (2007). Derivations
of DB priors for random effects are given in Garc\'{\i}a-Donato and Sun (2007).

Consider the full rank General Linear Model $\{N_n(\n y\mid \n
X_1\n\beta_1+\n X_e\n\beta_e, \sigma^2 \n I_n)\}$ and the problem of
testing $H_1:\n\beta_e=\n 0$. After the usual orthogonal
reparameterization (see e.g. Zellner and Siow 1984) and taking $n^*=
n$ and $\pi^N(\n\beta_1,\n\beta_e,\sigma)=\sigma^{-1}$, the DB
priors are
$$
\pi_1^D(\n\beta_1,\sigma)=\sigma^{-1},\hspace{.5cm} \pi_2^D(\n\beta_1,\n\beta_e,\sigma)=
\sigma^{-1}Ca_{k_e}(\n\beta_e\mid \n 0,n^* \sigma^2(\n V^{t}\n V)^{-1}),
$$
where $k_e$ is the dimension of $\n\beta_e$ and
$$
\n V=(\n{I}_n-\n{P}_1)\n{X}_e, \hspace{.5cm} \n{P}_1=\n{X}_1(\n{X}_1^{t}\n{X}_1)^{-1}\n{X}_1^{t}.
$$

Note that the exact matching of JZS and DB priors only occur if the effective sample size is $n^*
= n$. This `coincidence' was the original motivation for the specific choice $\uq+1/2$ in the
definition of DB priors (see Garc\'{\i}a-Donato, 2003 for details).
However, $n^*$ might well depend on the design matrix (or covariates). For example, in the linear
model $\n Y = \n X \theta + \n\epsilon$, with $\n X: n \times 1$ and $\theta$ scalar, it is
intuitively clear that if $\n X = (1,\ldots,1)^t$ then $n^*$ should be $n$, but if $\n X =
(1,\varepsilon, \ldots,\varepsilon)^t$ with $\varepsilon$ very small, then $n^*$ should be $1$.
The effective sample size defined in Berger et al. (2007) satisfies this requirement but other
definitions might not. Extended investigation of this issue is beyond the scope of this paper and
will be pursued elsewhere.

Since comparison among existing objective Bayesian testing procedures for the Linear model have
extensively been given in the literature, including Bayes factors derived with JZS priors, we skip
them here (see for example Berger, Ghosh and Mukhopadhyay, 2003; Liang et al., 2007; Bayarri and
Garc\'{\i}a-Donato, 2007).

\section{Approximations and computation}
In this Section, we derive simple approximations to DB priors and show their connections with already
existing proposal. We also exploit the connection between DB Bayes factors and a corrected Bayes factor
computed with usual (possibly improper) non-informative priors to propose easy MCMC computation of DB Bayes factors.
\subsection{Approximated DB priors}
It is well known (see Kullback 1968; Schervish 1995) that the Kullback-Leibler divergence measures can be
approximated up to second order using the expected Fisher information, so that:
$$
D^S[(\n\theta, \n\theta_0)\mid\n\nu]\approx (\n\theta-\n\theta_0)^t\, J_\theta (\n\theta_0,\n\nu)\,
(\n\theta-\n\theta_0)\approx D^M[(\n\theta,\n\theta_0)\mid\n\nu],
$$
where $J_\theta(\n\theta_0,\n\nu)$ is the block in Fisher
information matrix corresponding to $\n\theta$, evaluated at
$(\n\theta_0,\n\nu)$. Hence, for the problem  (\ref{compNuiO})
(recall that $\n\theta$ and $\n\nu$ are orthogonal), the DB priors
$\pi^D$ (either $\pi^{S}$ or $\pi^{M}$) can be approximated by
$\pi^D_1(\n\nu)=\pi^N(\n\nu)$ and

\begin{equation}\label{apDB}
\pi^D(\n\theta \mid \n\nu)=c(q_*,\n\nu)^{-1}\,h_{q_*} \Big((\n\theta-\n\theta_0)^t\,
\frac{J_\theta(\n\theta_0,\n\nu)}{n^*}\, (\n\theta-\n\theta_0)\Big)\, \pi^N(\n\theta \mid \n\nu),
\end{equation}
where now $q_* = \uq + 1/2$, and $\uq$ is the infimum of $q$ values for which the conditional defined
in \eqref{apDB} (in terms of Fisher information) is proper.

 The cases when $\pi^N(\n\theta\mid\n\nu)$ does not depend
on $\n\theta$ (so $\n\theta$ behaves asymptotically  as a location parameter) 
 are specially interesting. It is easy to then show that  $\uq=k/2$, where $k$ is the dimension of $\n\theta$ and hence
\begin{equation}\label{apDBloc}
\pi^D(\n\theta \mid \n\nu) \approx Ca_k(\n\theta\mid\n\theta_0,
n^*\,J^{-1}_\theta(\n\theta_0,\n\nu)) \, , 
\end{equation}
The conditional prior \eqref{apDBloc} has been interpreted by many authors (see for instance Kass and
Wasserman 1995) as the generalization of Jeffreys' ideas to multivariate problems.

Moreover, if $h_q(t)=e^{-qt}$ 
is used instead, then $\pi^D$ would essentially be the normal unit
information priors, as defined by Kass and Wasserman (1995) and
further studied by Raftery (1998). Note that we have shown  that
this proposals can be interpreted as approximated DB priors only
when $\n\theta$ is  asymptotically a location parameter.

\subsection{Computation of Bayes factor}
Interestingly enough, and similarly to other objective Bayesian proposals (like the intrinsic  and
fractional Bayes factors),  it can be shown that Bayes factors computed with DB priors, $B_{21}^D$,
can be expressed as an (invalid) Bayes factor computed with non-informative (usually improper)
priors, $B_{21}^N$,  multiplied by a correction factor. This expression also allows for easy
computation of DB Bayes factors when $B_{21}^N$ is easy to compute.

\begin{lem}\label{MCBF}
For problem (\ref{compNuiO}) (with $\n\theta$ and $\n\nu$ orthogonal), let $B_{12}^N$ denote the
Bayes factor computed using  $\pi_1^N(\n\nu)$ and $\pi_2^N(\n\theta, \n\nu)$, then for both the
sum and min DB-priors
\begin{equation}\label{BE}
B_{21}^{D}= B_{21}^N\, \times E^{\pi^N(\theta,\nu \mid y)}\,
\Big(c(q_*,\n\nu)^{-1}\,h_{q_*}(\bar{D}[(\n\theta,\n\theta_0)\mid\n\nu])\Big).
\end{equation}
\end{lem}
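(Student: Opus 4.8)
The plan is to write both Bayes factors as ratios of marginal likelihoods and reduce the claim to a single identity between the two $M_2$ marginals. First I would record that, by Definition~\ref{piDWn}, the DB prior under the simpler model is exactly $\pi_1^D(\n\nu)=\pi^N(\n\nu)$, which is the same prior $\pi_1^N(\n\nu)$ used to compute $B_{21}^N$; hence the marginals under $M_1$ coincide, $m_1^D(\n y)=m_1^N(\n y)$. Writing $B_{21}=m_2(\n y)/m_1(\n y)$ for either choice of prior, this common denominator cancels and the ratio of the two Bayes factors collapses to
$$
\frac{B_{21}^D}{B_{21}^N}=\frac{m_2^D(\n y)}{m_2^N(\n y)},
$$
so it remains only to identify the right-hand side with the stated posterior expectation.

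Next I would substitute the explicit $M_2$ DB prior, $\pi_2^D(\n\theta,\n\nu)=c(q_*,\n\nu)^{-1}\,h_{q_*}(\bar D[(\n\theta,\n\theta_0)\mid\n\nu])\,\pi^N(\n\theta\mid\n\nu)\,\pi^N(\n\nu)$, into the integral defining $m_2^D(\n y)$. Using the factorization $\pi^N(\n\theta\mid\n\nu)\,\pi^N(\n\nu)=\pi_2^N(\n\theta,\n\nu)$, the integrand becomes the non-informative integrand $f(\n y\mid\n\theta,\n\nu)\,\pi_2^N(\n\theta,\n\nu)$ times the factor $c(q_*,\n\nu)^{-1}h_{q_*}(\bar D[(\n\theta,\n\theta_0)\mid\n\nu])$. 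Dividing through by $m_2^N(\n y)$ converts $f(\n y\mid\n\theta,\n\nu)\,\pi_2^N(\n\theta,\n\nu)/m_2^N(\n y)$ into the non-informative posterior $\pi^N(\n\theta,\n\nu\mid\n y)$, so that $m_2^D/m_2^N$ is precisely the posterior expectation of that factor, which is exactly \eqref{BE}. I would emphasize that this computation never uses the specific form of the divergence: it relies only on the generic product structure of the DB prior, so the one derivation simultaneously covers the sum DB prior and the min DB prior, as claimed.

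The only delicate point is integrability, and this is where I would be careful. The rearrangement presupposes that $m_2^N(\n y)$ is finite and positive, so that the non-informative posterior exists and $B_{21}^N$ is well defined, and that the order of the divergence-weighting and the posterior integration may be interchanged. The latter is immediate: since $\bar D\ge 0$ we have $h_{q_*}\le 1$, and by Definition~\ref{piDWn} we assume $\uq<\infty$, whence $c(q_*,\n\nu)<\infty$ for a.e.~$\n\nu$; the correction factor is therefore nonnegative, and Tonelli's theorem applies directly to the nonnegative integrand so no conditional-convergence issues arise. Thus all manipulations are justified, and the identity \eqref{BE} holds precisely when the DB prior itself is well defined.
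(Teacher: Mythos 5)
Your proposal is correct and follows essentially the same route as the paper's own proof: exploit $\pi_1^D=\pi_1^N$ to cancel the $M_1$ marginals, then substitute the product form of $\pi_2^D$ into $m_2^D(\n y)$ and recognize the posterior $\pi^N(\n\theta,\n\nu\mid\n y)$ to obtain the expectation in \eqref{BE}. Your added justification via nonnegativity and Tonelli's theorem is a welcome refinement that the paper leaves implicit, but it does not change the structure of the argument.
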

\begin{proof} See Appendix. \end{proof}

Computation of $B_{21}^N$ is often  simpler than computation of proper Bayes factors. Then a
sample (usually MCMC) from the  posterior distribution $\pi^N(\n\theta,\n\nu\mid\n y)$ can be used
to evaluate the expectation in \eqref{BE}, thus considerably simplifying computation of $B_{12}^S$
or $B_{12}^M$. This is actually how we computed the Bayes factors for Example 6 in
Section~\ref{gamSubSub}.

Moreover, if $n$ is large (relative to the dimension of $\n\phi = (\n\theta, \n\nu)$, assumed fixed)
we can approximate \eqref{BE} using asymptotic expressions to posterior distribution along with the
approximated DB priors given in \eqref{apDB}. 

We illustrate the approach in a simple setting. First we assume that the asymptotic posterior
distribution is given by (see conditions in e.g. Berger 1985),
$$
\pi^N(\n\theta,\n\nu\mid\n y)\approx N(\hat{\n\phi},\n J^{-1}(\hat{\n\phi})),
$$
where $\hat{\n\phi}=(\hat{\n\theta},\hat{\n\nu})$ is the  (assumed to exist) maximum likelihood
estimate of $(\n\theta,\n\nu)$ and $\n J=\n J_\theta\oplus \n J_\nu$ is the (block diagonal) expected
Fisher information matrix of $f(\n y\mid\n\theta,\n\nu)$.

Next we assume that $\pi^N(\n\theta\mid\n\nu)$ does not depend on
$\n\theta$, so the approximating (conditional) DB prior is the
Cauchy prior in \eqref{apDBloc}. As a notational device, it will be
convenient to then write $\pi^N(\n\theta\mid\n\nu)$ as
$\pi^N(\n\theta_0\mid\n\nu)$. Expressing the Cauchy density
\eqref{apDBloc} in the usual way as a scale mixture of a Normal and
an inverse gamma, and using the asymptotic posterior, the DB Bayes
factors, as given in \eqref{BE}, can be approximated by
$$
B_{21}^D\approx B_{21}^N\, \int\int
\frac{1}{\pi^N(\n\theta_0\mid\n\phi)}N_k(\hat{\n\theta}\mid\n\theta_0,\Sigma(\n\nu,t))\,
 N_p(\n\nu\mid\hat{\n\nu}, \n J_\nu(\hat{\n\phi}))\,d\n\nu \,
IGa(t\mid\frac{1}{2},\frac{1}{2})\, dt,
$$
where $p$ is the dimension of $\n\nu$ and $\Sigma(\n\nu,t)= t \, n \, \n
J_\theta^{-1}(\n\theta_0,\n\nu)+\n J_\theta^{-1}(\hat{\n\phi})$. A similar asymptotic approximation
to $B_{12}^N$, finally gives the desired asymptotic approximation to the DB Bayes factor:
\begin{eqnarray*}
B^D_{21} &\approx&\frac{p(\n y\mid\hat{\n\phi})}{p(\n
y\mid\n\theta_0,\hat{\n\nu})}(2\pi)^{k/2}\frac{1}{det \n J_\theta(\hat{\n\phi})^{1/2}}\,\\
&\times& \int\int \frac{\pi^N(\hat{\n\theta}\mid\hat{\n\nu})}{\pi^N(\n\theta_0\mid\n\nu)}
N_k(\hat{\n\theta}\mid\n\theta_0,\Sigma(\n\nu,t))\, N_p(\n\nu\mid\hat{\n\nu}, \n J_\nu(\hat{\n\phi}))
IGa(t\mid\frac{1}{2},\frac{1}{2})\,d\n\nu\, dt,
\end{eqnarray*}
which is very easy to evaluate by simple Monte Carlo. Note that arbitrary constants in the possibly
improper $\pi^N(\n\theta \mid \n\nu)$ cancel out in the expression
above. 

\section{Summary and conclusions}
Extending pioneering work by Jeffreys (1961), we propose a new class of priors for objective Bayes
hypothesis testing based on divergence measures, which we call `Divergence Based' (DB) priors. For
divergence measures, we propose use of symmetrized versions (sum and the minimum) of Kullback
Liebler divergences. The resulting DB priors are usually easy to compute and have a number of
desirable properties as invariance under reparameterizations, evidence consistency and
compatibility with sufficient statistics. We explore DB priors in a series of estudy examples, in
which they show to be intuitively sound and to  produce sensible Bayes factors. This is so even
for irregular models and improper likelihoods, which are extremely challenging scenarios for other
objective Bayes testing methodologies. We recommend use of the sum-DB prior when it exists because
it is considerably easier to compute than the min-DB prior and seems to exhibit a nicer behavior.

The DB priors seem to behave similarly to the arithmetic intrinsic prior (when defined). Also, in
normal scenarios, they exactly reproduce the proposals of Jeffreys (1961) and Zellner and Siow
(1980, 1984), so that they can be considered an extension of these classical proposals to
non-normal situations. Approximations to DB priors are also shown to be connected with other
proposals as the unit information priors. Finally, we also provide asymptotic approximations to DB
Bayes factors for large sample size.

The definition of DB priors are based on particular choices of both
{\it 1)} an `objective prior' $\pi^N$ for estimation problems and
{\it 2)} an equivalent sample size $n^*$. Of course, there is no
general agreement in the literature about a single definition for
any of these concepts (and there might never be). We think that any
sensible proposals would produce nice results, but this in an issue
that needs to be further investigated. We recommend, when possible,
use of the {\it reference prior} (Berger and Bernardo, 1992) and of
the equivalent sample size in Berger et al. (2007).

Other apparently arbitrary choices that we made were those of $h_q$ and of $q_*$, however they were
based on some compelling arguments

\begin{itemize}
\item Choice of $h_q(t)=(1+t)^{-q}$ was specifically chosen to reproduce in the normal case
Jeffreys-Zellner-Siow priors, but there are other reasons for it. A compelling reason is that it
is a simple function resulting in Bayes factors with nice properties; another simple function to
use could be the exponential, but this results in normal priors that are not {\it evidence
consistent}. Also, $h_q$ results in priors with very heavy tails, which is important so as not to
`knock-out' the likelihood when data is not well explained by the null model. However, we do not
rule out that other choices  of  functions  $h(t)$  which are decreasing for $t\in[0,\infty)$,
with maximum at zero, and producing proper DB-type priors could work better in specific scenarios.

\item Choice of $q^* = \uq +1/2$. In principle, any $\uq +\delta$ could be used. As a matter of fact,
we do not expect that the specific choice of $\delta$ matters much as long as  $\delta\in(0,1)$
(needed to produce priors with heavy tails and no moments), but this again needs further
investigation. We recommend use of  $\delta=1/2$ because it is the value reproducing Jeffreys
proposal. 
\end{itemize}

\section*{Acknowledgements}
Comments by Jim Berger are gratefully acknowledged. This research was supported in part by the Spanish Ministry of Science and Technology, under Grant MTM2004-03290.

\section*{References}

\vspace{.5cm}\noindent Bayarri, M.J. and Garc\'{\i}a-Donato, G. (2007), ``Extending Conventional priors
for Testing General Hypotheses in Linear Models,'' {\it Biometrika,} {\bf 94,} 135-152.

\vspace{.5cm}\noindent Berger, J.O. (1985),
  {\it Statistical Decision Theory and Bayesian Analysis (2nd ed.),} New York:
  Springer-Verlag.

\vspace{.5cm}\noindent Berger, J. O. and Bernardo, J. M. (1992), ``On the development of the
reference prior method.''. In {\it Bayesian Statistics 4} (eds J. M. Bernardo, J. O. Berger, A. P.
Dawid and A. F. M. Smith), pp. 35-60. Oxford: Oxford University Press.

\vspace{.5cm}\noindent Berger, J.O. and Delampady, M. (1987), ``Testing precise hypotheses,'' {\it
Statistical Science}, {\bf 3,} 317-352.

\vspace{.5cm}\noindent Berger, J.O. and Mortera, J. (1999), ``Default Bayes Factors for Nonnested
Hypothesis Testing,'' {\it Journal of the American Statistical Association}, {\bf 94}, 542-554.

\vspace{.5cm}\noindent Berger, J.O., Ghosh, J.K. and Mukhopadhyay, N. (2003), ``Approximations to
the Bayes factor in model selection problems and consistency issues,'' {\it Journal of Statistical
Planning and Inference,} {\bf 112}, 241-58.

\vspace{.5cm}\noindent Berger, J. O. and Pericchi, L. R. (1996), ``The intrinsic Bayes factor for
model selection and prediction,'' {\it Journal of the American Statistical Association,} {\bf 91},
109-22.

\vspace{.5cm}\noindent Berger, J. O., and Pericchi, R. L. (2001), ``Objective Bayesian methods for
model selection: introduction and comparison (with discussion)''. In {\it Model Selection} (ed P.
Lahiri), pp. 135-207. Institute of Mathematical Statistics Lecture Notes-Monograph Series, volume
38. Beachwood Ohio.

\vspace{.5cm}\noindent Berger, J. O., Pericchi, L. R. and Varshavsky, J. A. (1998), ``Bayes
factors and marginal distributions in invariant situations,'' {\it Sankhya A,} {\bf 60}, 307-21.

\vspace{.5cm}\noindent Berger, J.O. and Sellke, T. (1987), ``Testing a point null hypothesis: the
irreconcilability of P-values
  and evidence,''
  {\it Journal of the American Statistical Association,}
  {\bf 82}, 112-122.

\vspace{.5cm}\noindent Berger, J. et al. (2007). ``Extensions and generalizations of BIC'', ISDS
Working paper, in preparation.

\vspace{.5cm}\noindent Bernardo, J.M. and Rueda, R. (2002), ``Bayesian hypothesis testing: A
reference approach,'' {\it International Statistical Review,} {\bf 70}, 351-372.

\vspace{.5cm}\noindent Bernardo, J.M. (2005), ``Intrinsic credible regions: An objective Bayesian
approach to interval estimation,'' {\it Test}, {\bf 14}, 317-384.

\vspace{.5cm}\noindent Clyde, M. (1999), ``Bayesian Model Averaging and Model Search Strategies
(with discussion)''. In {\it Bayesian Statistics 6} (eds J.M. Bernardo, A.P. Dawid, J.O. Berger,
and A.F.M. Smith), pp. 157-185. Oxford: Oxford University Press.

\vspace{.5cm}\noindent Clyde, M., DeSimone, H. and Parmigiani, G. (1996), ``Prediction via
Orthogonalized Model Mixing,'' {\it Journal of the American Statistical Association}, {\bf 91},
1197-1208.

\vspace{.5cm}\noindent Conover, W. J. (1971), {\it Practical nonparametric statistics,} New York:
John Wiley and Sons.

\vspace{.5cm}\noindent Cox, D. R. and Reid, N. (1987), ``Parameter orthogonality and approximate
conditional inference,'' {\it Journal of the Royal Statistical Society B}, {\bf 49}, 1-39.

\vspace{.5cm}\noindent Datta, G.S. and Ghosh, M. (1995), ``On the invariance of noninformative
priors'', {\it Annals of Statistics}, {\bf 24}, 141-159.

\vspace{.5cm}\noindent De Santis, F. and Spezzaferri, F. (1999), ``Methods for Default and Robust
Bayesian Model Comparison: The Fractional Bayes Factor Approach,'' {\it International Statistics
Review,} {\bf 67}, 267-286.

\vspace{.5cm}\noindent Garc\'{\i}a-Donato, G. (2003),
   {\it Factores Bayes Factores Bayes Convencionales: Algunos
    Aspectos Relevantes,} Unpublished PhD Thesis,
   Department of Statistics, University of Valencia.

\vspace{.5cm}\noindent Garc\'{\i}a-Donato, G. and Sun, D. (2007), ``Objective Priors for Model
Selection in One-Way Random Effects Models,'' {\it The Canadian journal of Statistics,} in press.

\vspace{.5cm}\noindent Hoeting, J.A, Madigan, D., Raftery, A.E. and Volinsky, C.T. (1999),
``Bayesian Model Averaging: A Tutorial,'' {\it Statistical Science,} {\bf 14}, 382-417.

\vspace{.5cm}\noindent Ibrahim, J. and Laud, P. (1994), ``A Predictive Approach to the Analysis of
Designed Experiments,'' {\it Journla od the American Statistical Association,} {\bf 89}, 309-319

\vspace{.5cm}\noindent Jeffreys, H. (1961). {\it Theory of Probability}, 3rd edn. London: Oxford
University Press.

\vspace{.5cm}\noindent Kass, R. E. and Raftery, A. E. (1995), ``Bayes factors,'' {\it Journal of
the American Statistical Association,} {\bf 90}, 773-95.

\vspace{.5cm}\noindent Kass, R. E. and Vaidyanathan, S. (1992), ``Approximate Bayes factors and
orthogonal parameters, with application to testing equality of two binomial proportions,'' {\it
Journal of the Royal Statistical Society B,} {\bf 54}, 129-44.

\vspace{.5cm}\noindent Kass, R. E. and Wasserman, L. (1995), ``A reference Bayesian test for
nested hypotheses and its relationship to the Schwarz criterion,'' {\it Journal of the American
Statistical Association,} {\bf 90}, 928-34.

\vspace{.5cm}\noindent Kullback, S. (1968), {\it Information Theory and Statistics,} New York:
Dover Publications, Inc.

\vspace{.5cm}\noindent Laud, P.W. and Ibrahim, J. (1995), ``Predictive Model Selection,'' {\it
Journal of the Royal Statistical Society B,}{\ bf 57}, 247-262.

\vspace{.5cm}\noindent Liang, F., Paulo, R., Molina, G., Clyde, M., and
  Berger, J. O. (2007),
   ``Mixtures of g -priors for Bayesian Variable Selection,''
   {\it Journal of the American Statistical Society}, in press.

\vspace{.5cm}\noindent Montgomery, D. (2001), {\it Introduction to Statistical Quality Control},
4th edn. John Wiley and Sons, Inc.

\vspace{.5cm}\noindent Moreno, E., Bertolino, F. and Racugno, W. (1998), ``An intrinsic limiting
procedure for model selection and hypotheses testing,'' {\it Journal of the American Statistical
Association,} {\bf 93}, 1451-60.

\vspace{.5cm}\noindent O'Hagan, A. (1995),
 ``Fractional Bayes factors for model comparison (with discussion),''
 {\it Journal of the Royal Statistical Society, B},
 {\bf 57,} 99-138.

\vspace{.5cm}\noindent Pauler, D. (1998), ``The Schwarz Criterion and Related Methods for Normal
Linear Models,'' {\it Biometrika,} {\bf 85}, 13-27.

\vspace{.5cm}\noindent Pauler, D.K., Wakefield, J.C. and Kass, R.E. (1999), ``Bayes factors and
approximations for variance component models,'' {\it Journal of the American Statistical
Association},
 {\bf 94,} 1242-1253.

\vspace{.5cm}\noindent P\'{e}rez, J.M. and Berger, J. (2001), ``Analysis of mixture models using
expected posterior priors, with application to classification of gamma ray bursts.'' In {\it
Bayesian Methods, with applications to science, policy and official statistics,} (eds E. George
and P. Nanopoulos), pp. 401-410. Official Publications of the European Communities, Luxembourg.

\vspace{.5cm}\noindent P\'{e}rez, J. M. and Berger, J. O. (2002), ``Expected posterior prior
distributions for model selection,'' {\it Biometrika,} {\bf 89}, 491-512.

\vspace{.5cm}\noindent P\'{e}rez, S. (2005), {\it M\'{e}todos Bayesianos objetivos de comparaci\'{o}n de
medias,} Unpublished PhD Thesis,    Department of Statistics, University of Valencia.

\vspace{.5cm}\noindent Raftery, A.E. (1998), ``Bayes factor and BIC: comment on Weakliem,''
Technical Report 347, Department of Statistics, University of Washington.


\vspace{.5cm}\noindent Schervisch, M.J. (1995), {\it Theory of Statistics.} New York:
Springer-Verlag.

\vspace{.5cm}\noindent Tanner, M.A. (1996), {\it Tools for Statistical Inference. Methods for the
exploration of Posterior Distributions and Likelihood Functions.} 3rd edn. New York: Springer
Verlag.

\vspace{.5cm}\noindent Zellner, A. and Siow, A. (1980), ``Posterior odds ratio for selected
regression hypotheses''. In {\it Bayesian Statistics 1} (eds J. M. Bernardo, M. H. DeGroot, D. V.
Lindley and A. F. M. Smith), pp. 585-603. Valencia: University Press.

\vspace{.5cm}\noindent Zellner, A. and Siow, A. (1984). {\it Basic Issues in Econometrics.}
Chicago: University of Chicago Press.


\section*{Appendix. Proofs.}


\noindent\underline{Proof of Proposition~\ref{inv}.} Let
$\bar{D}^*[\n\xi,\n\xi_0]$ be the unitary measure of divergence
between $f_1^*(\n y)$ and $f_2^*(\n y\mid\n\xi)$ in (\ref{compWNr}).
It is well known that $KL$ remains the same under one-to-one
reparameterizations, and clearly
$\bar{D}^*[\n\xi(\n\theta),\n\xi(\n\theta_0)]=\bar{D}[\n\theta,\n\theta_0]$.
Now, by definition of DB priors, and using the relation between
$\pi^N_\theta$ and $\pi^N_\xi$, it follows that
\begin{eqnarray*}
\pi^D_\theta(\n\theta)&=&c(q_*)\, h_{q_*}(\bar{D}[\n\theta,\n\theta_0])\, \pi^N_\theta(\n\theta)=
c(q_*)\,h_{q_*}(\bar{D}^*[\n\xi(\n\theta),\n\xi(\n\theta_0)]) \pi^N_\xi(\n\xi(\n\theta)) |{\cal
J}_\xi(\n\theta)|\\
&=&\pi^D_\xi(\n\xi(\n\theta))\,|{\cal J}_\xi(\n\theta)|.
\end{eqnarray*}

\bigskip

\noindent\underline{Proof of Proposition~\ref{CSS}.} Let
$D^*[\n\theta,\n\theta_0]$ be the symmetric divergence between
$f_1^*(\n t)$ and $f_2^*(\n t\mid\n\theta)$ in (\ref{suf}), and
hence $D^*[\n\theta,\n\theta_0]=D[\n\theta,\n\theta_0]$. The result
now follows from the assumption that neither $\pi^N$
nor $n^*$ change when the problem is formulated in terms of sufficient statistics.\\[.4cm]

\noindent\underline{Proof of Lemma~\ref{expCon}.} First we show that
\eqref{condExp} implies that $B_{12}^\pi\rightarrow 0$ as
$\bar{y}\rightarrow 0$. Assume $\int_0^1\,\mu^{-k}\pi(\mu)=\infty$.
Then
$$
\lim_{\bar{y}\rightarrow 0}\, m_2(\n y)= \lim_{\bar{y}\rightarrow 0} \int_0^\infty \mu^{-n}
e^{-n\bar{y}/\mu}\,\pi(\mu)d\mu\ge \int_0^1\, \mu^{-k}\pi(\mu)d\mu=\infty,
$$
and the result follows. To show the converse, note that, since
$\pi(\mu)$ is proper,
\begin{equation}\label{proper}
\lim_{\bar{y}\rightarrow 0} \int_1^\infty \mu^{-n} e^{-n\bar{y}/\mu}\,\pi(\mu)d\mu<\infty.
\end{equation}
Now, by contradiction suppose that for $n\ge k$, $\int_0^1\,
\mu^{-k}\pi(\mu)d\mu<\infty$, so in particular $\int_0^1\,
\mu^{-n}\pi(\mu)d\mu<\infty$, and hence the limit ing function
$g(\mu)=\mu^{-n}\pi(\mu)$ is integrable; now, the Dominated
Convergence Theorem gives
$$
\lim_{\bar{y}\rightarrow 0} \int_0^1 \mu^{-n} e^{-n\bar{y}/\mu}\,\pi(\mu)= \int_0^1
\mu^{-n}\pi(\mu)d\mu<\infty,
$$
which jointly with (\ref{proper}) contradicts the assumption of
$B_{12}^\pi\rightarrow 0$ as $\bar{y}\rightarrow 0$, proving the result.\\[.4cm]

\noindent\underline{Proof of Lemma~\ref{ecIM}.} It can easily be
seen that, as $T\rightarrow\infty$
$$
B_{21}^\pi\rightarrow e^{-n\theta_0}\,\int_{-\infty}^\infty\, e^{n\theta}\pi(\theta)d\theta,
$$
Now, $\forall n\ge k$, it follows that
$$
\int_{-\infty}^\infty\, e^{n\theta}\pi(\theta)d\theta\ge \int_{-\infty}^\infty\,
e^{k\theta}\pi(\theta)d\theta\ge \int_{\theta_0}^\infty\, e^{k\theta}\pi(\theta)d\theta,
$$
proving the lemma.\\[.5cm]

\noindent\underline{Proof of Theorem~\ref{invNui}.} By definition, the DB priors for the
reparameterized problem are $\pi^D_\nu(\n\nu)=\pi^N_\nu(\n\nu)$ and (recall $h_q(t)=(1+t)^{-q}$)
$$
\pi^D_{\xi,\eta}(\n\xi,\n\eta)=c^*(q_*,\n\eta)^{-1}h_{q^*}(\bar{D}^*[(\n\xi,\n\xi_0)\mid\n\eta])\,
\pi^N_{\xi\mid\eta}(\n\xi\mid\n\eta)\pi^N_\eta(\n\eta),
$$
where $\bar{D}^*[(\n\xi,\n\xi_0)\mid\n\eta]$ is the corresponding unitary measure of divergence
between the competing models $f_1^*$ and $f_2^*$ in (\ref{compNuiOR}) and
$$
c^*(q_*,\n\eta)=\int\,h_{q^*}(\bar{D}^*[(\n\xi,\n\xi_0)\mid\n\eta])\,
\pi^N_{\xi\mid\eta}(\n\xi\mid\n\eta)d\n\xi.
$$
It can be easily shown that
$\bar{D}^*[(\n\xi,\n\xi_0)\mid\n\eta]=\bar{D}[(\n\theta,\n\theta_0)\mid\n\nu]$.
Also, under the assumptions of the theorem,
$\pi_{\theta,\nu}^N(\n\theta,\n\nu)=\kappa_2\,
\pi_{\xi,\eta}^N(\n\xi(\n\theta),\n\eta(\n\nu)) \, |{\cal
J}_{\xi,\eta}(\n\theta,\n\nu)|$, where $\kappa_2$ is a constant.
Then
$$
\pi^N_{\theta\mid\nu}(\n\theta,\n\nu)=\frac{\kappa_2}{\kappa}\,
\pi^N_{\xi\mid\eta}(\n\xi(\n\theta)\mid\n\eta(\n\nu))\,|{\cal J}_{\xi}(\n\theta)|,
$$
and hence
$$
c(q_*,\n\nu)=\frac{\kappa_2}{\kappa}\,c^*(q_*,\n\eta(\n\nu)),
$$
and the result follows.\\[.5cm]

\noindent\underline{Proof of Lemma~\ref{MCBF}.} For $i=1,2$, let $m_i^D(\n y)$ and $m_i^N(\n y)$
denote the prior predictive marginals obtained with $\pi_i^D$ and $\pi_i^N$, respectively. By
definition of DB priors, $m_i^N(\n y)=m_i^D(\n y)$, and hence
$$
B_{21}^D=\frac{m_2^D(\n y)}{m_1^D(\n y)}=\frac{m_2^N(\n y)}{m_1^N(\n y)}\, \frac{m_2^D(\n
y)}{m_2^N(\n y)}=B_{21}^D\, \frac{m_2^D(\n y)}{m_2^N(\n y)}.
$$
Finally
\begin{eqnarray*}
m_2^D(\n y)&=&\int f(\n y\mid\n\theta,\n\nu)\, \pi^D(\n\theta,\n\nu)d\n\theta d\n\nu\\
&=& \int f(\n y\mid\n\theta,\n\nu)\, c(q_*,\n\nu)^{-1}
h_{q_*}(\bar{D}[(\n\theta,\n\theta_0)\mid\n\nu]) \pi^N(\n\theta,\n\nu) d\n\theta d\n\nu\\
&=& m_2^N(\n y)\,\int c(q_*,\n\nu)^{-1} h_{q_*}(\bar{D}[(\n\theta,\n\theta_0)\mid\n\nu])
\pi^N(\n\theta,\n\nu\mid \n y) d\n\theta d\n\nu,
\end{eqnarray*}
and the result holds.

%

\end{document}